\theoremstyle{plain}
\newtheorem{lemma}{Lemma}[section]
\newtheorem{proposition}{Proposition}[section]
\newtheorem{theorem}{Theorem}[section]
\newtheorem{corollary}{Corollary}[section]
\theoremstyle{definition}
\newtheorem{definition}{Definition}[section]
\theoremstyle{remark}
\newtheorem{remark}{Remark}[section]
\newtheorem{example}{Example}[section]
\newcommand{\prog}{Markov-}
\newcommand{\K}{K-}
\newcommand{\oracle}{Type-2-}
\newcommand{\progs}{Markov}
\newcommand{\Ks}{K}
\newcommand{\oracles}{Type-2}
\newcommand{\defin}[1]{\emph{\textbf{#1}}}
\newcommand{\bd}[1]{\boldsymbol{#1}}
\newcommand{\A}{\mathcal{A}}
\newcommand{\BB}{\mathbb{B}}
\newcommand{\B}{\mathcal{B}}
\newcommand{\N}{\mathbb{N}}
\renewcommand{\O}{\mathcal{O}}
\renewcommand{\P}{\mathcal{P}}
\renewcommand{\S}{\mathcal{S}}
\newcommand{\sierp}{\mathbb{S}}
\newcommand{\down}{\mathop{\!\downarrow}}
\newcommand{\dv}{\mathop{\!\uparrow}}
\newcommand{\at}{\text{at }}
\newcommand{\up}[1]{[#1]}
\newcommand{\cantor}{\{0,1\}^\N}
\newcommand{\Nbar}{\overline{\N}}
\newcommand{\dom}{\text{dom}}
\newcommand{\Km}{Km}
\newcommand{\U}{\mathcal{U}}
\newcommand{\V}{\mathcal{V}}
\newcommand{\W}{P_c(\N)}
\newcommand{\X}{\mathcal{X}}
\newcommand{\Y}{\mathcal{Y}}
\newcommand{\res}[1]{\mathbin\upharpoonright_{#1}}
\title{On the information carried by programs\\ about the objects they compute\footnote{This work was partially supported by Inria program ``Chercheur invit\'e''. CR was partially supported by FONDECYT project  11110226 and BASAL PFB-03 CMM, Universidad de Chile.}}
\author{Mathieu Hoyrup and Crist\'obal Rojas}
\begin{document}

\maketitle

\begin{abstract}
In computability theory and computable analysis, finite programs can compute infinite objects. Presenting a computable object via any program for it, provides at least as much information as presenting the object itself, written on an infinite tape. What additional information do programs provide? We characterize this additional information to be any upper bound on the Kolmogorov complexity of the object.  Hence we identify the exact relationship between \prog computability and \oracle computability. We then use this relationship to obtain several results characterizing the computational and topological structure of \prog semidecidable sets.  
 \end{abstract}

\section{Introduction}

We assume that the reader is familiar with Turing machines and basic computability theory  over the natural numbers. To define computability over infinite objects, one still uses Turing Machines but has to set up a way for them to access such objects. In any case, the input of the machine is a finite or infinite sequence of symbols written on the input tape and one has to choose a suitable way to describe infinite objects by such symbolic sequences. We now briefly describe the two main approaches that have been developed.  

The first one was introduced and studied by Turing \cite{Tur36}, Grzegorczyk \cite{Grzegorczyk57}, Lacombe \cite{Lacombe55} and later Kreitz and Weihrauch \cite{Wei00} and is nowadays known as Type-Two computability. In this model, the description itself is completely written on the input tape of the machine. At any time, the machine can read a finite portion of this description. We will call this the \defin{\oracle model}. The second approach, promoted by the Russian school led by Markov \cite{Markov54,Kushner06}, gives an alternative. In this model one restricts the action of the machine to operate on computable (infinite) objects only, in the sense that they have  computable descriptions. Instead of having access to description themselves as in the \oracle model, the machine here has access to a \emph{program} computing a description. We will call this the \defin{\prog model}. These two approaches provide a priori different computability notions, and their comparison has been an important subject of study \cite{Rice53, MyhillShe55, Shapiro56, KreLacSch57, Ceitin62, Friedberg58, PourEl60, Hertling96, Spreen01}.

 It is clear that the \prog model  is at least as powerful as the \oracle model, so the question is:  does it allow to compute  strictly more than the \oracle model?  The answer depends on the objects that we consider, and the algorithmic tasks we want to perform on them. The computational  power of these models can therefore be classified according to these parameters. Table \ref{ShortTable} summarizes the most celebrated results in this direction. The computable objects considered are the partial computable functions  and the total computable functions. The algorithmic tasks considered are decidability and semidecidability of properties about these objects.

 \begin{table}[htdp]
\caption{Some celebrated results comparing \prog computability to \oracle computability. }
\begin{center}\label{ShortTable}
\begin{tabular}{|r|c|c|}\hline
Objects \, \qquad \qquad \qquad & Decidability & Semidecidability \\ \hline
Partial computable functions &  $\underset{Rice}{\text{\progs}\equiv\text{\oracles}}$ & $\underset{Rice-Shapiro}{\text{\progs}\equiv\text{\oracles}}$ \\ \hline
Total  computable functions  &  $\underset{Kreisel\,\,et\,\, al/Ceitin}{\text{\progs}\equiv\text{\oracles}}$ &  $\underset{Friedberg}{\text{\progs}>\text{\oracles}}$ \\ \hline

\end{tabular}
\end{center}
\label{default}
\end{table}%

Kreisel-Lacombe-Sh\oe nfield/Ceitin's Theorem \cite{KreLacSch57, Ceitin62} for instance, states that over total computable functions,  \prog decidability   is equivalent to  \oracle decidability\footnote{In its original form, this theorem is stated for functionals.}. This means that the  machine trying to decide a property,  when provided with a program $p$ for a function $f$, cannot do better than just running $p$ to evaluate $f$. The machine gains no additional information about $f$ from $p$. We note that Ceitin's version of this result shows that over the real line, \prog computable functions and \oracle computable functions coincide. 

On the other  hand, Friedberg \cite{Friedberg58} exhibited properties about total computable functions that are \prog semidecidable but not \oracle semidecidable. So that for semidecidability,  a program $p$ for a function $f$ \emph{does} give some additional information that can be exploited by the machine. The main question we rise in this paper is the following:

\begin{center}\emph{Can we characterize the additional useful information contained in a program computing an object, as compared to having the object itself ? }
\end{center}

To get some intuition, consider the following fundamental difference between the two models.  In the \oracle model, at any given time only a  finite portion of the description of $x$ is provided,  which corresponds to a finite approximation of $x$. Clearly, this approximation is also good for  infinitely many other objects -- all the ones that are ``close enough'' to $x$. In particular, \emph{$x$ is never completely specified}. In the \prog model on the other hand, the program provided to the machine completely specifies $x$ from the beginning of the calculation!   This increases the predictive power of $M$, which might therefore be able to perform stronger calculations. The point is to understand in which situations this fact can be exploited. A trivial example is obtained when one considers the relativized setting:  \emph{every} function is \prog computable relative to an appropriate (powerful enough) oracle.  Whereas whatever oracle $A$ we consider, \oracle computable functions relative to $A$ must always be continuous.  

This observation takes us to another interesting point that separates the \prog model from the \oracle model, namely their topological structure. It is well known that \oracle computability and topology are closely related: e.g. the \oracle computable functions are exactly the effectively continuous ones, and the \oracle semidecidable properties exactly correspond to the effectively open sets.  The connection between \prog computability and topology, on the other hand, appears to be much less clear. In particular, Friedberg's construction provides a \prog semidecidable set which is not open (for the standard topology restricted to computable elements).

An obvious solution to relate \prog computability to topology is to consider precisely the topology generated by all the \prog semidecidable sets -- the so called Ershov's topology. The question then becomes: 

\begin{center}
How do \prog semidecidable sets look like? can we characterize Ershov's open sets ? 
\end{center}

In the present paper we  make use of Kolmogorov Complexity to provide a fairly complete  answer to these and other questions in different settings. Our main result is a characterization of the additional information provided by a program, when the class of objects considered  are the computable points of an effective topological space.  It can be informally stated as follows (see Section \ref{main-results}):

\bigskip
\noindent \textbf{Theorem A }\emph{Over effective topological spaces,  a program computing  $x$ provides as much information as a description  of $x$ itself  together with  \textbf{any upper bound on the Kolmogorov complexity of x.}}

\bigskip

Here, the Kolmogorov complexity $K(x)$ of a computable infinite object $x$ is to be understood as the size of the shortest program computing a description of $x$. Obviously, any program for $x$ trivially provides, in addition to a description,  an upper bound on its Kolmogorov complexity.  Theorem A  says that this bound is all the exploitable additional information it provides. 

Thus, we have a third model to deal with computable infinite objects. In this model, input $x$ is presented to the machine as a pair $(d,k)$, where $d$ is a description for $x$ and $k$ a bound on the Kolmogorov complexity of $x$. We shall call this the \defin{\K model.} In these terms, a particular case of Theorem {A} can be stated as follows:  if $\X$, $\Y$ are \emph{effective topological spaces} (not necessarily metric) and $X_c$, $Y_{c}$ are the corresponding set of computable points, then a function $f:X_c\to Y_{c}$ is \prog computable if and only if it is \K computable.

A simple observation shows that one can not in general compute a program for $x$ from a \K description of $x$, meaning that the two notions are not fully equivalent. Despite this fact,  Theorem {A} is valid in great generality:  it holds for decidability, semidecidability and also higher in the hierarchy. In proving this we make a fundamental use of the Recursion Theorem.  Interestingly, although the Recursion Theorem does not relativize (a well known fact),  Theorem {A} relative to the halting set also holds in many cases,  but for entirely different reasons.

The \K model also sheds light into the structure of the open sets of  Ershov's topology, providing  a nice characterization in terms of Kolmogorov complexity, at least in the  particular case of the extended natural numbers $\overline{\N}=\N \cup \{\infty\}$.

\medskip

\noindent \textbf{Theorem B }\emph{On the extended natural numbers, the Ershov topology is generated by the sets $\{n\}$ and $\{x\in\overline{\N}:K(x)<h(x)\}$ for some computable order $h$.}

\medskip

With the same techniques, we are able to prove several other related results that are interesting on their own. For example, we show that there is no effective enumeration of the \prog semidecidable sets of $\overline{\N}$ and that there is a \prog semidecidable subset of $\{0,1\}^{\N}$ that is not $\Sigma_{2}^{0}$.

Finally, in the search of the limitations of our techniques, we turn our attention to more general spaces and analyze functions with values on topological spaces that have an admissible representation but are not countably-based. In particular, when this is the space of open subsets of Baire space $\O(\BB)$, we show that \prog computability can be strictly stronger than \K computability:

\medskip

\noindent\textbf{Theorem C}. \emph{For functions from the partial computable functions with values on $\O(\BB)$ one has that:}
\begin{equation*}
\emph{\prog computability \, >\, \K computability \, > \, \oracle computability.} 
\end{equation*}

One of the main question that remains open is whether the first strict inequality in Theorem C holds if we replace the \emph{partial} computable functions by the \emph{total} ones. The situation is summarized in table \ref{LongTable}.

\begin{table}[htdp]
\caption{Some results comparing \prog computability, \K computability and \oracle computability. $\sierp=\{\bot,\top\}$ is the Sierpi\'nski space whose topology is generated by $\{\top\}$.}
\begin{center}\label{LongTable}
\begin{tabular}{|c|c|c|c|}\hline
Space  $\X$ & Semidecidable & $\emptyset '$-Semidecidable & F$: \X \to \O(\BB)$\\ \hline
$\sierp$ &  \progs{} ${\equiv}$ \Ks{}  $\equiv$ \oracles{} & \progs{} ${>}$ \Ks{} $\equiv$ \oracles{} & \progs{} $>$ \Ks{} $\equiv$ \oracles{} \\ \hline
Partial functions &  \progs{} $\equiv$ \Ks{} $\equiv$ \oracles{} & \progs{} > \Ks{} $\equiv$ \oracles{}  & \progs{} > \Ks{} > \oracles{} \\ \hline
Total functions & \progs{} $\equiv$  \Ks{} > \oracles{}  & \progs{} $\equiv$ \Ks{} > \oracles{}  &  \progs{} \textcolor{red}{?} \Ks{} > \oracles{} \\ \hline
\end{tabular}
\end{center}
\label{default2}
\end{table}%

The paper is organized as follows. We start by providing the basic notions and definitions in Section \ref{sec_background}. In Section \ref{main-results} we introduce the \K model and present our main results.  Section \ref{sec_structure}  contains several results that shed light on the structure of \prog semidecidable sets and in Section \ref{sec_negative} we present the announced negative results. Finally, Section \ref{sec_future} contains a list of related problems for possible future work. 
\section{Background}\label{sec_background}
\subsection{Notations and basic definitions.}

We assume the reader is familiar with computability theory. Let $\{\varphi_{e}\}_{e\in\N}$ be an effective enumeration  of the set 
of computable partial functions. We denote by $\W$ the collection of c.e.\ subsets of $\N$ and $W_e=\dom(\varphi_{e})$ the induced effective enumerations of its elements. If $A\in \W$, an \defin{index} of $A$ is a number $e$ such that $W_e=A$. If $A$ is a c.e.\ set, implicitly given by an index, $A[s]$ is the finite subset of $A$ enumerated by stage $s$, so that $A[s]\subseteq A[s+1]$ and $A=\bigcup_s A[s]$. We use the notation $A[\at s]=A[s]\setminus A[s-1]$ if $s\geq 1$ and $A[\at 0]=A[0]$. If $F$ is a finite subset of $\N$ then $\up{F}$ is the collection of supersets of $F$.  $\BB=\N^{\N}$ will denote  Baire space.



\medskip
\subsection{Effective topological spaces.} 


An \defin{effective topological space} is a tuple $(\X,\tau,\B)$ where $(\X,\tau)$ is a non-empty topological space, $\B=\{\B_{i}\}_{i\in\N}$ is numbered basis such that there exists a computable function $f:\N\times \N\to\N$ satisfying $\B_i\cap \B_j=\bigcup_{k\in W_{f(i,j)}}\B_k$.

Given an effective topological space $\X$, the standard representation is defined as a surjective map $\rho:\dom(\rho)\subseteq\BB\to \X$ satisfying $\rho(f)=x$ whenever $\{f(n):n\in\N\}=\{i:x\in \B_i\}$. We will call any  $f\in \rho^{-1}(x)$, a \defin{\oracle name} of $x$. An element $x$ is \defin{computable} if it has a computable \oracle name.  We denote by $\bd{X_c}$ the set of computable points. The countable set $X_c$ has a canonical numbering $\nu$ defined by $\nu(i)=x$ if $\varphi_i$ is a name of $x$. We will call such an $e$ a \defin{\prog name} of $x$. To facilitate the reading of the paper, we will use the font $A, N, U$ when working on the space $X_{c}$, and the fonts $\A, \mathcal{N}, \U$ when working on $\X$.  

\subsubsection{\oracle computability and \prog computability}

Let $(\X,\tau,\B)$ and $(\Y,\tau',\B')$ be effective topological spaces. In what follows R stands for both \emph{\oracles{}} and \emph{\progs}.    A set $A\subseteq X_{c}$ is \defin{R-semidecidable} if there is a Turing machine $M$ which, when provided with an R-name of $x$, halts if and only if $x\in A$. A function $F:X_{c}\to Y_{c}$ is \defin{R-computable} if there is a Turing machine $M$ which,  when provided with an R-name of $x$,  writes  an R-name for $f(x)$ on its one-way output tape. It is not hard to see that a function $f:X_{c}\to Y_{c}$ is R-computable if and only if the sets $f^{-1}(\B'_i)$ are uniformly R-semidecidable.
 
 \begin{remark}
It is worth noting that for a function $f:X_{c}\to Y_{c}$, being \prog computable is equivalent to having a Machine $M$ which, provided with a  \prog name of $x$, outputs a \emph{\oracle name} of $f(x)$. Indeed, combining the program for $x$ with the program for $M$ gives a program for $f(x)$.
We also note that a function $f:X_{c}\to Y_{c}$ which is \oracle computable does not necessarily extends to a \oracle computable function $\overline{f}:\X\to\Y$. 

\end{remark}


 A numbering $\eta$ of $X_c$ is \defin{admissible} if it is equivalent to the canonical numbering $\nu$ in the sense that there exists  partial computable functions $f$  and $g$ such that $\nu=\eta\circ f$ on $\dom(\nu)$ and $\eta=\nu\circ g$ on $\dom(\eta)$.  The \prog computability notions do not depend on the choice of the admissible numbering.  We will often use the admissible numbering $\eta$ of $X_c$ defined by $\eta(e)=x$  whenever $W_e=\{i\in\N:x\in \B_i\}$.

\oracle computability and topology are closely related. A set $\U\subseteq \X$ is an \defin{effective open set} if there exists $e\in\N$ such that $\U= \bigcup_{i\in W_e}\B_i$.  If $A= \U \cap X_{c}$, we will then say that $A$ is effectively open \defin{in $\bd{X_{c}}$}. The connection is established by  the following result (see \cite{Wei00}).

\begin{theorem} A set  $A\subseteq X_{c}$  is  \oracle semidecidable  if and only if it is effectively open in $X_{c}$. Therefore, a function $f:X_{c}\to Y_{c}$ is \oracle computable if and only if it is effectively continuous, i.e.\ the sets $f^{-1}(\B'_i)$ are uniformly effectively open in $X_{c}$.
\end{theorem}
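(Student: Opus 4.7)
The plan is to establish the set-level equivalence first and then derive the function-level equivalence from its uniform form. For the forward direction --- a \oracle semidecidable set $A \subseteq X_c$ is effectively open in $X_c$ --- I would exploit the standard ``finite-use'' observation. Let $M$ be a machine halting on every \oracle name of $x$ exactly when $x \in A$. On any halting computation $M$ reads only a finite prefix $f(0),\ldots,f(t)$ of its input name $f$; set $F=\{f(0),\ldots,f(t)\}$. Since $f$ is a valid name of $x$ we have $x\in\bigcap_{i\in F}\B_i$. Conversely, for any computable $y\in\bigcap_{i\in F}\B_i$ one can build a \oracle name of $y$ that starts with exactly the prefix $f(0),\ldots,f(t)$, by listing the elements of $F$ first and then enumerating the remaining indices $i$ with $y\in\B_i$; on such a name $M$ halts as well, witnessing $y\in A$. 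Enumerating all finite input sequences on which $M$ halts therefore yields an effective open $\U$ with $A=\U\cap X_c$, and the enumeration is uniform in an index of $M$.

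For the reverse direction, if $A=\U\cap X_c$ with $\U=\bigcup_{i\in W_e}\B_i$, a machine reading a \oracle name $f$ of $x$ dovetails the enumeration of $W_e$ with the values of $f$, halting as soon as some $f(n)$ appears in $W_e$; this happens exactly when some basic neighborhood of $x$ lies in $W_e$, i.e.\ when $x\in\U$. For the function statement, I would note that $f:X_c\to Y_c$ is \oracle computable iff the preimages $f^{-1}(\B'_i)$ are uniformly \oracle semidecidable: the ``only if'' direction simulates the machine for $f$ on the input name and waits for $i$ to appear in the output, while the ``if'' direction dovetails the uniform semidecision procedures and outputs $i$ whenever the $i$-th one halts, thus producing a valid \oracle name of $f(x)$. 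Combined with the uniform version of the set equivalence above, this yields the effective continuity characterization.

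The main obstacle is the finite-use step in the forward direction: one must justify that any finite prefix $F\subseteq\{i:y\in\B_i\}$ can be extended into a complete \oracle name of $y$. This follows easily from the computability of $y$ (which gives a computable enumeration of $\{i:y\in\B_i\}$ that can be concatenated after $F$, with duplicates allowed since the representation only fixes the \emph{set} of indices listed), but it must be stated carefully because it is exactly what permits transporting a halting computation on a name of $x$ to a halting computation on a suitably chosen name of an unrelated point $y$.
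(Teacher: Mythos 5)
Your argument is correct and is essentially the canonical one: the paper does not prove this theorem itself but cites Weihrauch \cite{Wei00}, where precisely your finite-use/prefix-extension argument (halting computations read a finite prefix, which determines a basic neighborhood $\bigcap_{i\in F}\B_i$ --- uniformly effectively open by the intersection axiom on $\B$ --- whose computable points can all be given names extending that prefix) establishes the set-level equivalence, and the function statement is reduced to it exactly as you do, via the remark that R-computability of $f$ is equivalent to uniform R-semidecidability of the sets $f^{-1}(\B'_i)$. The only micro-detail to add is that in your ``if'' direction for functions the output must be padded with repetitions of already-emitted indices so that the produced name is a total element of Baire space even when $\{i: f(x)\in\B'_i\}$ is enumerated slowly or is finite.
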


As mentioned in the introduction, in order to have an analogous result for \prog computability, we have to use Ershov's topology on $X_{c}$, which may be different from the topology of $\X$ restricted to $X_{c}$.


\begin{example}
Let $\BB=\N^{\N}$ be the Baire space. For each finite sequence $u$, let $[u]$ be the set of infinite extensions of $u$, called a cylinder. We endow $\BB$ with the topology generated by the cylinders, which is an effective topology. The standard numbering $\varphi_e$ of partial computable functions, restricted to the indices of total functions is an admissible numbering of $\BB_{c}$.
\end{example}

\begin{example}
Let $\P(\N)$ be space of subsets of $\N$. For each finite set $F\subseteq\N$, let $[F]$ be the set of supersets of $F$. We endow $\P(\N)$ with the Scott topology, generated by the sets $[F]$, which is an effective topology. The standard numbering $W_e=\dom(\varphi_e)$ of c.e.\ sets is admissible numbering of  $P(\N)_{c}$
\end{example}

\section{Main results}\label{main-results}

In this section, $(\X,\tau,\B,\nu_\B)$ is always an effective topological space and $X_c$ is the subset of computable elements. We start by explaining the main idea behind our results.  Let $x\in X_{c}$ be a fixed element. From a machine \oracle semideciding a set $A$ containing $x$, one can compute a neighborhood $\mathcal{N}$ of $x$ such that for every element $y\in X_{c}$ the following implication holds:
\begin{equation}\label{implication}
y\in \mathcal{N}\implies y\in A.
\end{equation}

Now assume that $A$ has the weaker property of being \prog semidecidable, and still contains $x$. From a machine \prog semideciding $A$ one cannot in general compute such a neighborhood, which may not exist as shown by Friedberg's example. However, from the \prog name of any other element $y\in X_{c}$ one can still compute a neighborhood $\mathcal{N}_{y}$ of $x$ such that implication \eqref{implication} holds. Further, as a finite intersection of neighborhoods is still a neighborhood, one can compute a neighborhood $\mathcal{N}$ satisfying implication \eqref{implication} for all $y$ in a given finite set. Using this argument we can show that the problem $x\in A$ can be \oracle semidecided \emph{as soon as we know, in addition, a finite list of programs containing at least one for $x$}. This additional information is equivalent to having any upper bound on the Kolmogorov complexity of $x$, which leads us to the notion of \K computability that we now introduce.

\subsection{\K computability}

\begin{definition}
The \defin{Kolmogorov complexity K(x)} of a computable element $x\in X_c$ is the length of a shortest program computing a \oracle name of $x$.
\end{definition}
In this paper, whether we use prefix-free, monotone or plain machines will not make any difference so we do not need to specify the definition any further.

\begin{definition}
A \defin{K-name} of a computable element $x\in X_c$ consists of a pair $(k,f)$ where $k\geq K(x)$ and $f$ is a \oracle name of $x$.
\end{definition}

\begin{remark} Note that $k$ is only an upper  bound on the Kolmogorov complexity of $x$ and not necessarily of $f$, which may even be non computable. Note also that knowing any such $k$ is effectively equivalent to knowing any upper bound on a \prog name of $x$. This is what we will rather use in our proofs.

\end{remark}

The \defin{\K computability} notions are defined in the same way as in the previous section. We will denote by $\bd{X_{c}(k)}$ the set of computable elements whose Kolmogorov complexity is at most $k$. Note that $X_{c}=\bigcup_{k}X_{c}(k)$ and that \K computability is the same as \oracle computability on $X_{c}(k)$, uniformly in $k$. In particular, a set $A\subseteq X_{c}$ is \K semidecidable iff there exists uniformly effective open sets $\U_{k}$ such that $A\cap X_{c}(k)=\U_{k}\cap X_{c}(k)$. 

Thus, for each notion of computability we have so far three  versions, depending on the way the objects are represented. 

It is clear that one can compute \K names from \prog names. An important first observation is the fact that the converse does not necessarily holds. In other words, the  representations underlying \prog computability and \K computability are not equivalent.

\begin{proposition}\label{prop_K_Markov_repr}
In general, it is not possible to compute \prog names from \K names. 
\end{proposition}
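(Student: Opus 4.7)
The plan is to work in Baire space $\BB=\N^{\N}$ and derive a contradiction from the continuity (use principle) of Type-2 computations. Suppose, for contradiction, that there is a Type-2 machine $M$ such that for every \K name $(k,f)$ of a computable element of $\BB$, the value $M^{f}(k)$ is a \prog name of the element coded by $f$, i.e.\ an index $e$ with $\varphi_{e}=f$.

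First I would apply $M$ to the constant zero sequence $f_{0}=0^{\N}$ with bound $k$, for each $k\geq K(f_{0})$. By hypothesis the computation halts with some output $e_{0}(k)$ satisfying $\varphi_{e_{0}(k)}=f_{0}$, after inspecting the oracle only on some finite initial segment of length $N(k)$. Because $M$ is a fixed machine, $N(k)$ is computable uniformly from $k$ (just simulate $M^{f_{0}}(k)$ until it halts and record the use).

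Next I would diagonalize by constructing $f_{k}\in\BB$ equal to $f_{0}$ everywhere except at position $N(k)$, where $f_{k}(N(k))=1$. Since $f_{k}$ is uniformly computable from $k$, we get $K(f_{k})=O(\log k)$, so for all sufficiently large $k$ we have $K(f_{k})\leq k$, and hence $(k,f_{k})$ is a valid \K name of $f_{k}$. By the use principle, $M^{f_{k}}(k)$ inspects the same positions as $M^{f_{0}}(k)$, on which $f_{k}$ and $f_{0}$ agree, so $M^{f_{k}}(k)=e_{0}(k)$. This forces $\varphi_{e_{0}(k)}=f_{k}$, but also $\varphi_{e_{0}(k)}=f_{0}\neq f_{k}$, a contradiction.

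The only delicate point is the complexity bound $K(f_{k})\leq k$, needed so that we may legitimately feed $(k,f_{k})$ to $M$. This bound holds because the whole procedure producing $f_{k}$ from $k$ uses $M$ as a fixed (constant-size) subroutine, so $K(f_{k})\leq K(k)+O(1)=O(\log k)$, which is eventually below $k$. Once this calibration between the bound $k$ and the size of the diagonalizing perturbation is in place, the contradiction follows immediately from Type-2 continuity, and no further ingredients are required.
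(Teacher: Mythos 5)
Your proof is correct, and it reaches the contradiction by a genuinely more direct route than the paper's. Both arguments share the same core mechanism: run the hypothetical converter $M$ on a \K{}name of $0^{\omega}$, observe that the computation halts after a finite use, and then defeat $M$ with a sequence that agrees with $0^{\omega}$ on the used portion. The difference is in how the contradiction is closed. The paper fixes the perturbations \emph{in advance}: it takes the uniformly computable family $x_{f(i)}$ equal to $0^\omega$ if $\varphi_i(i)$ diverges and to $0^t1^\omega$ if $\varphi_i(i)$ halts in time $t$, and shows that correctness of $M$ forces any $1$ in $x_{f(i)}$ to occur within the use $u_i$ of the run on the all-zeros name; this yields a computable bound on the halting time of $\varphi_i(i)$ and hence decides the Halting problem. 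You instead construct the perturbation \emph{after} seeing the use: $f_k$ differs from $0^\omega$ exactly at position $N(k)$, and the loop is closed by the calibration $K(f_k)\leq K(k)+O(1)=O(\log k)\leq k$ for large $k$, so that $(k,f_k)$ is a legitimate input on which the use principle forces $M$ to output an index of $0^\omega$. This calibration is the one delicate point and you handle it correctly: $f_k$ is computable from $k$ by simulating $M^{f_0}(k)$ (which halts whenever $k\geq K(f_0)$, a constant), so its complexity is logarithmic in $k$, eventually below the bound $k$ being fed to $M$. Your version is self-contained and avoids any mention of the Halting problem; the paper's version buys an explicit reduction (any such converter would decide Halting, with the diagonal witnesses given by a fixed computable family rather than extracted from the use), which is the same uniformity your own $k\mapsto f_k$ construction exhibits implicitly. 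Two cosmetic remarks: $N(k)$ is computable only for $k\geq K(f_0)$, which is all you use; and since your $f_k$ is $\{0,1\}$-valued, your argument in fact establishes the claim on Cantor space as well as on Baire space, matching the setting of the paper's proof.
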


\begin{proof}
Let $f:\N\to\N$ be a computable function such that $x_{f(i)}=0^\omega$ if $\varphi_i(i)$ does not halt, $x_{f(i)}=0^t1^\omega$ if $\varphi_i(i)$ halts in time $t$.

Assume that there is an oracle Turing machine $M$ that converts each \K name of a sequence into a program computing the sequence (a \prog name).  There is a computable function $k:\N\to\N$ such that $k(i)$ is an upper bound on the complexity of $x_{f(i)}$ and of $0^\omega$. Given $i$, run $M$ on $0^{k(i)}10^\omega$, let $u_i$ be its use when it halts. $M$ halts and outputs a program computing the sequence $0^\omega$. On $0^{k(i)}1x_{f(i)}$ it must halt and output a program computing $x_{f(i)}$, so if $x_{f(i)}$ contains a $1$ then it must occur early in the sequence, namely before $u_i-k(i)-1$. As a result, if $\varphi_i(i)$ halts then it must halt in time $u_i-k(i)-1$. It enables one to decide the Halting problem, which is a contradiction.
\end{proof}

One can show that on Cantor space,  \prog names are limit-computable  (can be \emph{learned}) from \K names: given $x$ and $k\geq K(x)$, one can compute a sequence of natural numbers converging to an index of $x$ (this problem was investigated in the context of inductive inference \cite{FreiWie79}). 
A c.e.\ set, however, cannot be learned in this way.  Actually one can prove a stronger statement. 
\begin{proposition}\label{prop_K_Markov_inference}
There is no Turing functional $\Phi$ that, given an index $e$ and a \oracle name of a set $W$ which is either $\N$ or $W_e$, computes a sequence of numbers converging to an index of $W$. 
\end{proposition}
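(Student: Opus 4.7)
The plan is to derive a contradiction via the Recursion Theorem combined with the continuity of Turing functionals.

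Suppose $\Phi$ satisfies the hypothesis. Let $\alpha_0$ denote the canonical Type-2 name of $\N$ given by $\alpha_0(n)=n$. By the $s$-$m$-$n$ theorem I define a computable function $f\colon\N\to\N$ such that $W_{f(e)}$ is enumerated by the following procedure: simulate $\Phi(e,\alpha_0)$ stage by stage, recording at each stage $s$ the current guess $g_s$ and the use $u_s$ (the largest position of $\alpha_0$ read so far), and enumerate every $k<u_s$ that is not already in $W_{f(e)}$; stop all future enumeration the first time we see $g_s=g_{s-1}$. The Recursion Theorem then yields an $e^*$ with $W_{e^*}=W_{f(e^*)}$.

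Since $\alpha_0$ names $\N\in\{\N,W_{e^*}\}$ trivially, the hypothesis on $\Phi$ forces the sequence $\Phi(e^*,\alpha_0)$ to converge to an index $g^*$ of $\N$. In particular the guesses stabilise, so the stopping rule fires at some finite stage $T$, and $W_{e^*}$ equals the finite set $\{0,1,\ldots,u_T-1\}$. Now I consider the Type-2 name $\alpha_1=(0,1,\ldots,u_T-1,u_T-1,u_T-1,\ldots)$, which is a valid name of the finite set $W_{e^*}$. Because $\alpha_0$ and $\alpha_1$ coincide on the first $u_T$ symbols, continuity of $\Phi$ gives $\Phi(e^*,\alpha_1)[t]=\Phi(e^*,\alpha_0)[t]$ for all stages $t\le T$. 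The hypothesis on $\Phi$ applied to $\alpha_1$ forces the sequence $\Phi(e^*,\alpha_1)$ to converge to an index of the finite set $W_{e^*}$, which cannot equal $g^*$ (an index of the infinite set $\N$).

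The main obstacle is the possibility that $\Phi(e^*,\alpha_0)$ keeps reading past position $u_T$ after its guess has stabilised, so that the computations on $\alpha_0$ and $\alpha_1$ begin to diverge on positions beyond $u_T$ and the naive continuity argument stops short of a contradiction. To push the argument through I plan to strengthen the construction of $f$ so that further reads by $\Phi$ after stabilisation would re-trigger the enumeration of $W_{f(e)}$, making $W_{e^*}$ infinite at the fixed point and contradicting the already-established finiteness; combined with a careful bookkeeping of $\Phi$'s use via the $s$-$m$-$n$ theorem, this forces the two runs to remain identical on all stages, so that the common limit $g^*$ is simultaneously an index of $\N$ and of the finite set $W_{e^*}$ — the desired contradiction.
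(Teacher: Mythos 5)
There is a genuine gap, and it is exactly the one you flag yourself in your last paragraph — but the repair you sketch does not work. Your construction keeps the name fixed at $\alpha_0$ (a name of $\N$) and tries to extract a contradiction from two runs, on $\alpha_0$ and on $\alpha_1$. No contradiction is available from this configuration: since $\alpha_0$ is a legitimate name of $\N$ and $\alpha_1$ is a legitimate name of the finite set $W_{e^*}$, the functional $\Phi$ is fully entitled to converge to an index of $\N$ on the first and to a \emph{different} limit, an index of $W_{e^*}$, on the second; the two runs agreeing up to stage $T$ creates no tension, because limits are not determined by finite prefixes. Your proposed fix — re-triggering enumeration of $W_{f(e)}$ whenever $\Phi$ reads past $u_T$ — backfires: a functional that outputs an infinite sequence of guesses may (and, without loss of generality, does) read its entire oracle, so at the fixed point the enumeration never stops, $W_{e^*}=\N$, and $\Phi$ on $\alpha_0$ then converges, correctly, to an index of $\N=W_{e^*}$. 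Your argument only yields a contradiction when $\Phi(e^*,\alpha_0)$ has \emph{finite total use}, which nothing forces. Note also that ``stabilisation'' is a limit ($\Pi^0_2$) property, so there is no stage at which a construction can detect it and react; your stopping rule $g_s=g_{s-1}$ fires, but possibly long before the true limit, which is symptomatic of the same problem.

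The missing idea, which is how the paper's proof proceeds, is that the \emph{oracle itself must be built adaptively}, in tandem with the c.e.\ set, via the Recursion Theorem. The paper constructs a single infinite name $t=\lim t_i$ together with $W_a=\bigcup_i F_i$ (where $t_i$ enumerates exactly $F_i$), alternating two moves: first extend the oracle as if the set were $\N$ until $\Phi$ outputs an index $i$ with $W_i$ containing an element not yet enumerated; then pad the oracle with blanks, so that if $\Phi$ never changed its mind from $i$, the construction would stall, the padded oracle would be a genuine name of $W_a$ (which would then equal the current finite set), and $\Phi$ would be converging to the wrong index $i$ — hence a new guess $j\neq i$ must appear. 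Iterating forces infinitely many mind changes along $t$, which in the limit \emph{is} a valid name of $W_a$, contradicting convergence. The self-reference supplied by the Recursion Theorem is what makes each mind change forced rather than merely possible; a fixed canonical name of $\N$, as in your proposal, cannot play this role because $\Phi$ genuinely succeeds on it.
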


\begin{proof}
Assume that such a $\Phi$ exists. Using the Recursion theorem, we define an index $a$ in the following way. At the same time we enumerate a c.e.\ set $W_a$ and we build an oracle $f$. 

We define $W_a=\bigcup_i F_i$ where $F_i$ is a computable sequence of finite sets and $F_i\subseteq F_{i+1}$. At the same time we define an oracle $t$ as the limit of a computable sequence of finite strings $t_i$ such that $t_{i+1}$ extends $t_i$. The finite string $t_i$ contains exactly the elements of $F_i$.

The strings $t_i$ are such that $\Phi^{t_i}(a)$ outputs a (finite or infinite) sequence of indices taking at least $i$ values. Hence $\Phi^t(a)$ outputs an infinite sequence of indices taking infinitely many values.

We start with $F_0=\emptyset$ and $t_0$ is the empty string. Assume $F_i$ and $t_i$ have been defined. Look for a finite extension $u$ of $t_i$ such that $\Phi^u(a)$ eventually outputs an index $i$ such that $W_i$ contains some number that is not in $u$ (such a $u$ must be found: on a representation of $\N$ starting with $t_i$, $\Phi$ must eventually output a program enumerating $\N$, let $u$ be the finite part of the oracle that is read when such a program is output). Let $F_{i+1}$ be the set of elements enumerated in $u$. Run $\Phi^{u0^\omega}(a)$ and look for a number $j\neq i$ output later than $i$ (such a $j$ must be found, otherwise the oracle is a representation of $W_a$ so it must eventually stabilize on a program $j$ enumerating the elements of $u$, and $j$ must be different from $i$). Let $t_{i+1}$ be the part of the oracle that is used in the computation of $j$.
\end{proof}

The rest of this  section is devoted to show that, despite the facts above, the notions of  \prog computability and \K computability are indeed equivalent to a large extent.

\subsection{Equivalence between \prog computability and \K computability}

We will use the Recursion Theorem. See \cite{Rog87}.

\begin{theorem}[Recursion Theorem]
For every computable total function  $f$, there exists  $e$ such that $\varphi_{e}=\varphi_{f(e)}$. Moreover, $p$ can be computed from an index of $f$. 

\end{theorem}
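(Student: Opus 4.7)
The plan is to apply Kleene's classical self-referential fixed-point argument, which relies only on the s-m-n theorem. First I would define the partial computable function $\psi(x,y) = \varphi_{\varphi_{x}(x)}(y)$, obtained by first simulating $\varphi_{x}(x)$ and, if it halts with some value $z$, running $\varphi_{z}(y)$. Applying s-m-n to $\psi$ yields a \emph{total} computable function $s:\N\to\N$ satisfying $\varphi_{s(x)}(y) = \varphi_{\varphi_{x}(x)}(y)$, both sides being simultaneously defined or not.

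Given the total computable $f$, I would then set $h := f \circ s$, which is again total computable, pick any index $v$ of $h$, and claim that $e := s(v)$ is a fixed point. The verification is the usual chain
\begin{equation*}
\varphi_{e} = \varphi_{s(v)} = \varphi_{\varphi_{v}(v)} = \varphi_{h(v)} = \varphi_{f(s(v))} = \varphi_{f(e)},
\end{equation*}
where the second equality uses the defining property of $s$ together with the fact that $\varphi_{v}(v) = h(v)$ is defined because $h$ is total.

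For the effectivity claim, from an index of $f$ one can compute an index $v$ of the composition $h = f\circ s$ by a routine s-m-n argument (composition is effective on indices), and then $e = s(v)$ is immediate. There is really no hard step here; the only thing one must be careful about is that $s$ is \emph{total} as a function on $\N$, which is guaranteed by s-m-n regardless of whether $\varphi_{s(x)}$ itself happens to be defined on all inputs. I read the ``$p$'' appearing at the end of the theorem statement as a typo for ``$e$''.
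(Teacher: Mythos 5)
Your proof is correct and is precisely the classical Kleene fixed-point argument: the paper itself gives no proof of this theorem, deferring to Rogers~\cite{Rog87}, and your construction (the diagonal function $\psi(x,y)=\varphi_{\varphi_x(x)}(y)$, s-m-n, and the fixed point $e=s(v)$ for an index $v$ of $f\circ s$, with effectivity via effective composition of indices) is exactly the standard proof found there. Your reading of ``$p$'' as a typo for ``$e$'' is also the correct interpretation of the statement.
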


The following Lemma contains the main technical  arguments.

\begin{lemma}\label{lem_ext}
Let $A$ be a c.e.\ subset of $\N$. There exist uniformly effective Scott open sets $\U_k\subseteq \P(\N)$, such that for every c.e.\ set $E$ the following hold:
\begin{enumerate}
\item[i)] if all the indices of $E$ belong to $A$ then $E\in \U_k$ for every $k$,
\item[ii)] if no index of $E$ belongs to $A$ then $E\notin \U_k$ for every $k\geq K(E)$.
\end{enumerate}
\end{lemma}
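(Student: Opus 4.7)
The plan is to construct $\U_k$ as an effective Scott open set of the form $\bigcup_{F\in V_k}[F]$, where $V_k$ is a c.e.\ collection of finite subsets of $\N$ enumerated uniformly in $k$. The central device will be the Recursion Theorem used with parameters to produce, for every finite $F\subseteq \N$ and every $q\leq k$, an index $\hat q(F,k)\in\N$ such that $W_{\hat q(F,k)}=F\cup W_q$. We then enumerate $V_k$ by adding $F$ at stage $s$ as soon as $\hat q(F,k)\in A[s]$ for every $q\leq k$; this is manifestly uniformly c.e.\ in $k$.

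I would verify (ii) first, as it is the direct payoff of the construction. Suppose $K(E)\leq k$, so $E=W_{q_0}$ for some $q_0\leq k$, and assume no index of $E$ belongs to $A$. If some $F\in V_k$ satisfied $F\subseteq E$, then by construction $\hat q_0(F,k)\in A$. The key point is that $W_{\hat q_0(F,k)}=F\cup W_{q_0}=F\cup E=E$ (because $F\subseteq E$), so $\hat q_0(F,k)$ is an index of $E$ that lies in $A$, contradicting the hypothesis on $E$. Hence $E\notin \U_k$, which is exactly (ii).

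For (i), I would argue by a Rice--Shapiro style use of the Recursion Theorem. Assuming all indices of $E$ belong to $A$ yet no finite $F\subseteq E$ lies in $V_k$, for every finite $F\subseteq E$ there is some $q(F)\leq k$ with $\hat{q(F)}(F,k)\notin A$. Apply the Recursion Theorem to build an index $r$ of a c.e.\ set that mirrors $E$'s enumeration, but whose behavior forks self-referentially on whether $r$ itself has been observed in $A$ (branching analogously to the classical Rice--Shapiro construction, where one case forces $W_r=E$ and the other forces $W_r$ to be a specific finite initial segment of $E$). A case analysis then yields either a finite $F\subseteq E$ that actually ends up in $V_k$ (directly contradicting the assumption), or an index of $E$ that must lie outside $A$ (contradicting the hypothesis of (i)).

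The hardest step will be orchestrating the Recursion Theorem argument in (i) so that it works without appealing to extensionality of $A$. The classical Rice--Shapiro trick leverages the equivalence $r\in A\iff W_r\in \mathcal A$, which is unavailable here; instead, one must use the family $\{\hat q(F,k)\}_{q\leq k}$ from the construction to \emph{transfer} the failure $\hat q(F,k)\notin A$ back into a failure for some actual index of $E$, combining the self-referential $r$ with the family's specific $\hat q$-indices so that whatever "branch" $r$ ends up taking, one of the $\hat q(F,k)$ with $F=W_r$ at the critical stage turns out to be an index of $E$.
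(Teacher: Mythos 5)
Your part (ii) is correct and is essentially the paper's own argument for (ii): if $F\in V_k$ with $F\subseteq E$ and $q_0\leq k$ is an index of $E$, then $\hat q_0(F,k)$ is an index of $F\cup E=E$ lying in $A$. But notice that your indices $\hat q(F,k)$, with $W_{\hat q(F,k)}=F\cup W_q$ unconditionally, require only the s-m-n theorem --- there is no self-reference to $A$ anywhere in your construction of $\U_k$. That omission is fatal to (i), and not merely ``hard to orchestrate'': for your fixed $\U_k$, statement (i) is genuinely false for some c.e.\ $A$. Concretely, fix $q_\emptyset$ an index of $\emptyset$ and choose the s-m-n function one-one with decidable range, so that $R=\{\hat q_\emptyset(F,k): F \text{ finite},\, k\geq q_\emptyset\}$ is decidable. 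Take $A=\N\setminus R$ and $E=\N$. Every element of $R$ is an index of the finite set $F\cup\emptyset=F$, never of $\N$, so \emph{all} indices of $E$ lie in $A$; yet for every finite $F$ and every $k\geq q_\emptyset$ the requirement ``$\hat q(F,k)\in A$ for all $q\leq k$'' fails at $q=q_\emptyset$, so $V_k=\emptyset$ and $E\notin\U_k$. Since (i) fails outright for your construction, the Rice--Shapiro-style Recursion Theorem argument you sketch for (i) is attempting to prove a false statement about an already-fixed family $\U_k$; no amount of post hoc diagonalization can repair it.

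The paper's fix is to bake the self-reference into the very indices that the enumeration of $\U_k$ interrogates. By the Recursion Theorem there is a computable $e(a,b)$ with $W_{e(a,b)}=W_a$ if $e(a,b)\notin A$, and $W_{e(a,b)}=W_a[t]\cup W_b$ if $e(a,b)$ enters $A$ at stage $t$; one enumerates $\up{W_a[t]}$ into $\U_k$ once $e(a,b)\in A$ for all $b\leq b_k$ (where $b_k$ bounds the indices of elements of complexity $\leq k$), with $t$ the stage by which all have entered. Now (i) is forced: if $a$ is an index of $E$ and some $e(a,b)\notin A$, then $W_{e(a,b)}=W_a=E$, so $e(a,b)$ would itself be an index of $E$ outside $A$, contradicting the hypothesis; hence all $e(a,b)$ enter $A$ and $E\in\up{W_a[t]}\subseteq\U_k$. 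The ``freeze at stage $t$ and switch to $W_b$'' clause then yields (ii) by exactly the computation you carried out: if $b\leq b_k$ is an index of $E$ and $E\supseteq W_a[t]$, then $W_{e(a,b)}=W_a[t']\cup W_b=E$ with $e(a,b)\in A$. In short: your (ii) survives, but (i) requires the conditional, $A$-dependent definition of the witness indices, which your construction lacks.
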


The argument is uniform: the open sets $\U_k$ are effective, uniformly in a c.e.\ index of $A$.
\begin{proof}
Using the Recursion theorem, there is a computable function $e(a,b)$ such that for all $a,b\in\N$,
\begin{equation*}
W_{e(a,b)}=\begin{cases}
W_a&\text{if $e(a,b)\notin A$,}\\
W_a[t]\cup W_b&\text{if $e(a,b)\in A[\at t]$.}
\end{cases}
\end{equation*}

Let $k\in\N$. We define an effective open set $\U_k$.  Compute $b_{k}$ such every element whose complexity is less than $k$ has an index less than $b_{k}$.  If $a$ is such that for all $b\leq b_k$, $e(a,b)\in A$ then let $t$ be minimal such that $e(a,b)\in A[t]$ for all $b\leq b_k$, enumerate $\up{W_a[t]}$ into $\U_k$.

We now check the two announced conditions. \emph{i)} Let $E\subseteq\N$ be a c.e.\ set.
Assume that every index of $E$ belongs to $A$ and let $a$ be an index of $E$. For all $b$, $e(a,b)\in A$ (otherwise $e(a,b)$ is an index of $W_a=E$ but $e(a,b)\notin A$, contradiction), so $\U_k$ contains $\up{W_a[t]}$ for some $t$, which contains $E$.
\emph{ii)} Assume that $K(E)\leq k$, that no index of $E$ belongs to $A$ and that $E\in \U_k$.  Let $b\leq b_k$ be an index of $E$. As $E\in \U_k$, $E$ belongs to some $[W_a[t]]$ enumerated into $\U_k$ (here $a$ is not the same as above and is not assumed to be an index of $E$). As $e(a,b)\in A$, $W_{e(a,b)}=W_a[t']\cup W_b$ for some $t'\leq t$. As $W_a[t']\subseteq W_a[t]\subseteq W_b$, $e(a,b)$ is an index of $E$ that belongs to $A$, contradicting the assumption.
\end{proof}


We know state the main explicit versions of Theorems {A} and B. 

\begin{theorem}\label{thm_semidec}
Let $\X$ be an effective topological space. A set $A\subseteq X_c$ is \prog semidecidable iff  it is \K semidecidable. The equivalence is uniform.
\end{theorem}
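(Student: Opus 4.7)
The forward implication \K{} $\Rightarrow$ \prog{} is immediate: from a \prog{} name $e$ of $x$ I produce the \K{} name $(|e|+c,\,\varphi_e)$, where $c$ absorbs the overhead of the fixed universal machine, and feed it into any \K{} semidecision procedure for $A$.

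For the nontrivial direction, the plan is to reduce to Lemma~\ref{lem_ext}. Given a machine $M$ that \prog{} semidecides $A$, I set $\tilde A:=\{e\in\N : M \text{ halts on input } e\}$, a c.e.\ subset of $\N$ that is saturated in the sense $e\in\tilde A\iff \eta(e)\in A$ for every \prog{} name $e$ of a point in $X_c$. Applying Lemma~\ref{lem_ext} to $\tilde A$ yields uniformly effective Scott opens $\U_k\subseteq\P(\N)$, each of the form $\U_k=\bigcup_{F\in S_k}\up{F}$ for a uniformly c.e.\ family $S_k$ of finite subsets of $\N$.

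I then pull these back to $\X$. Define
\begin{equation*}
\mathcal{V}_k \;:=\; \bigcup_{F\in S_k}\;\bigcap_{i\in F} \B_i,
\end{equation*}
which is effectively open in $\X$ uniformly in $k$, using the computable function witnessing basis intersections. For $x\in X_c$ put $E_x:=\{i:x\in\B_i\}$; then the c.e.\ indices of $E_x$ are exactly the \prog{} names of $x$ under $\eta$, and $K(E_x)=K(x)+O(1)$ because a \oracle{} name of $x$ and an enumeration of $E_x$ are effectively interconvertible. By design $x\in\mathcal{V}_k\iff E_x\in\U_k$. Lemma~\ref{lem_ext}(i) gives $x\in A\Rightarrow x\in\mathcal{V}_k$ for every $k$, while Lemma~\ref{lem_ext}(ii) gives $x\notin A$ and $k\geq K(E_x)\Rightarrow x\notin\mathcal{V}_k$. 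Reindexing $\U'_k:=\mathcal{V}_{k+c}$ for the $O(1)$ overhead $c$ above, I obtain $A\cap X_c(k)=\U'_k\cap X_c(k)$, which is exactly the \K{} semidecidability condition; the whole construction is uniform in the code of $M$.

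The heart of the argument is Lemma~\ref{lem_ext} itself, where a Recursion Theorem trick converts the upper bound $k$ on Kolmogorov complexity into a positive topological certificate in $\P(\N)$; that is the step I expect to be the main obstacle. Once the lemma is in hand, the passage from $\P(\N)$ to a general effective topological space $\X$ is a routine change of representation: the admissible numbering aligns \prog{} names of $x$ with c.e.\ indices of $E_x$, and the effective basis structure pulls effective Scott opens back to effective opens.
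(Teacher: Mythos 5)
Your proposal is correct and takes essentially the same route as the paper's proof: both reduce to Lemma~\ref{lem_ext} by identifying $x\in X_c$ with the c.e.\ set $E_x=\{i:x\in\B_i\}$, applying the lemma to the c.e.\ set of \prog{}names accepted by the semideciding machine, and reading off the \K{} semidecision procedure from the resulting opens $\U_k$. The only difference is presentational: the paper compresses your explicit pullback $\mathcal{V}_k=\bigcup_{F\in S_k}\bigcap_{i\in F}\B_i$ and the $O(1)$ shift between $K(x)$ and $K(E_x)$ into the single remark that $\X$ is \oracle{}computably homeomorphic to a subspace of $\P(\N)$.
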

\begin{proof}
Every effective topological space is \oracle computably homeomorphic to a subspace of $\P(\N)$: to $x\in \X$, associate $\{i\in\N:x\in \B_i\}$ where $\B_i$ is enumeration of the basis of $\X$. Hence we can assume that $\X$ is a subspace of $\P(\N)$. Let $I\subseteq\N$ be a c.e.\ set such that for all $e\in\N$ for which $W_e\in X_{c}$, it holds $W_e\in A\iff e\in I$. Each c.e.\ set $E\in X_{c}$ either has all its indices in $I$ or has no index in $I$, so the effective open sets $\U_k$ provided by Lemma \ref{lem_ext} coincide with $A$ on the set of elements of $X_{c}$ whose complexity is at most $k$. Now, a machine \K semideciding $A$ works as follows: given a \oracle name of  $E\in X_{c}$ and $k\geq K(E)$, it tests whether $E\in \U_k$ and halts in this case only.
\end{proof}

\begin{corollary}\label{cor_function}
Let $\X,\Y$ be effective topological spaces. A function $f:X_c\to Y_{c}$ is \prog computable iff $f$ is \K computable. The equivalence is uniform.
\end{corollary}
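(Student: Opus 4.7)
The plan is to reduce the corollary to Theorem~\ref{thm_semidec} via the standard equivalence between computability of a function and semidecidability of preimages of basic open sets, applied uniformly over a basis of $\Y$.

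First, I would establish the \K-version of this equivalence: $f : X_c \to Y_c$ is \K computable iff the family $\{f^{-1}(\B'_i)\}_{i\in\N}$ of preimages of basic open sets of $\Y$ is uniformly \K semidecidable in $i$. The forward direction is immediate: given a machine producing a \K-name of $f(x)$ from a \K-name of $x$, the \oracle-name component enumerates exactly those indices $i$ with $f(x)\in \B'_i$, hence uniformly \K semidecides $f^{-1}(\B'_i)$. For the converse, given a \K-name $(k,g)$ of $x$ and uniform \K semideciders $S_i$ for $f^{-1}(\B'_i)$, run all $S_i$ in parallel on $(k,g)$ and output $i$ whenever $S_i$ halts; this produces a \oracle-name of $f(x)$. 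To complete this into a \K-name of $f(x)$ one also needs an upper bound on $K(f(x))$, and this is the one delicate point: any program $p$ of length $\leq k$ producing a \oracle-name of $x$ can be composed with the fixed parallel-simulation machinery above to yield a program of length at most $k+O(\log k)$ producing a \oracle-name of $f(x)$. Hence $K(f(x))\leq k+O(\log k)$, and this bound is computable from $k$, completing the \K-name. The analogous characterization in the \progs case is already recorded in the paper.

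Second, I would apply Theorem~\ref{thm_semidec} to each set $f^{-1}(\B'_i)$: \prog semidecidability and \K semidecidability are equivalent, uniformly in a witnessing index, and therefore the family $\{f^{-1}(\B'_i)\}_{i\in\N}$ is uniformly \prog semidecidable iff it is uniformly \K semidecidable. Combining this with the two characterizations of function computability in terms of semidecidability of preimages yields the desired equivalence together with its uniformity.

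The main obstacle is the bookkeeping of Kolmogorov bounds in the first step: one must be sure that from an upper bound on $K(x)$ one can effectively produce an upper bound on $K(f(x))$, and this is exactly what the compositional argument delivers. Once this point is settled, the corollary is a direct unfolding of Theorem~\ref{thm_semidec} over a basis of $\Y$.
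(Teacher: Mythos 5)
Your proof is correct and takes essentially the same route as the paper: both reduce the corollary to Theorem~\ref{thm_semidec} through the equivalence between computability of $f$ and uniform semidecidability of the preimages $f^{-1}(\B'_i)$ of basic open sets of $\Y$. The only difference is that you explicitly verify the \K{}side of that equivalence---producing the bound $K(f(x))\leq k+O(\log k)$ by composing a shortest program for $x$ with the fixed parallel-simulation machine, which is indeed the right way to close the one gap---whereas the paper treats this characterization as immediate from the definitions.
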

\begin{proof}
Let $B_i$ be the numbered basis of $\Y$. $f$ is \prog computable iff the sets $f^{-1}(\B_i)$ are uniformly \prog semidecidable iff these sets are \K semidecidable (Theorem \ref{thm_semidec}) iff $f$ is \K computable. 
\end{proof}



We now show that the argument in the proof of Lemma  \ref{lem_ext} can be extended from  semidecidability to  weaker classes of properties, showing that for most algorithmic tasks, the additional information given by programs is indeed just an upper bound on the Kolmogorov complexity. 

\noindent\textbf{Hierarchies.} Let $\X$ be an effective topological space. We consider the finite levels of the effective Borel hierarchy, defined as follows. The class $\Sigma^0_1$ consists of the effective open sets. The class $\Sigma^0_{n+1}$ consists of the effective unions of differences of $\Sigma^0_n$-sets. The classes $\Pi^0_n$ consists of complements of $\Sigma^0_n$-sets. The class $\Delta^0_n$ is the intersection of $\Sigma^0_n$ and $\Pi^0_n$. Inside the class $\Delta^0_2$ we consider the finite levels of the effective difference hierarchy. For $n\in\N$, the class $\mathcal{D}_n$ consists of the differences of $n$ effective open sets, i.e.\ the sets $(\U_0\setminus \U_1)\cup \ldots (\U_{n-2}\setminus \U_{n-1})$ if $n$ is even and the sets $(\U_0\setminus \U_1)\cup \ldots \U_{n-1}$ if $n$ is odd. In the case  $\X=\N$ with the discrete topology, the effective Borel hierarchy is exactly the arithmetical hierarchy, the class $\mathcal{D}_n$ of effective difference hierarchy is exactly the class of $n$-c.e.\ sets.

\begin{theorem}\label{thm_nce}
A set $A\subseteq X_c$ is \prog $n$-c.e.\ iff it is \K $n$-c.e. More precisely, the set of indices of elements of $A$ is $n$-c.e.\ on the set of indices of $X_c$ iff there exist uniformly effective open sets $\U^1_k,\ldots,\U^n_k$ such that $A\cap X_c(k)=D_n(\U^1_k,\ldots,\U^n_k)\cap X_c(k)$.
\end{theorem}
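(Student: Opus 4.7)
The plan is to mimic the proof of Theorem~\ref{thm_semidec}, embedding $\X$ into $\P(\N)$ and replacing Lemma~\ref{lem_ext} by an $n$-c.e.\ generalization.

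For the easy direction (\K $n$-c.e.\ implies \prog $n$-c.e.), suppose $A\cap X_c(k) = D_n(\U^1_k,\ldots,\U^n_k)\cap X_c(k)$ uniformly in $k$. Given a \prog name $e$ of $x\in X_c$, I compute a bound $k(e)\geq K(x)$ from $e$ (any $k(e)=|e|+c$ for a suitable constant $c$ works). The predicates ``$x\in \U^j_{k(e)}$'' are uniformly c.e.\ in $e$, since one can compute a \oracle name of $x$ from $e$ and each $\U^j_k$ is effectively open. Using that $D_n$ can be rewritten over a nested chain of open sets, the truth value of ``$x\in D_n(\U^1_{k(e)},\ldots,\U^n_{k(e)})$'' then admits a computable approximation starting at $0$ with at most $n$ changes, which is exactly the definition of an $n$-c.e.\ predicate in $e$.

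The nontrivial direction extends the Recursion-Theorem construction of Lemma~\ref{lem_ext} to track $n$ successive flips of the approximation of the index set $I$ of $A$. Fix an $n$-c.e.\ approximation $I_s$ of $I$. Using the Recursion Theorem with $n-1$ auxiliary parameters, I would define a computable $e(a,b_1,\ldots,b_{n-1})$ such that $W_{e(a,\vec b)}$ starts as $W_a$ and, each time the membership of $e(a,\vec b)$ in $I_s$ flips for the $j$-th time at a stage $t_j$, the current content of $W_{e(a,\vec b)}$ is frozen and $W_{b_j}$ is merged in. Thus after $j$ flips, $W_{e(a,\vec b)}=W_a[t_1]\cup W_{b_1}[t_2]\cup\cdots\cup W_{b_{j-1}}[t_j]$. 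The effective open sets $\U^j_k$ are then built as in Lemma~\ref{lem_ext}: with $b_k$ a computable bound on the indices of elements of complexity $\leq k$, whenever a common stage $t$ witnesses that for every $(b_1,\ldots,b_{n-1})$ with $b_i\leq b_k$ the index $e(a,\vec b)$ has flipped at least $j$ times in $I_s$ by stage $t$, enumerate the basic open set $\up{W_a[t_1]\cup W_{b_1}[t_2]\cup\cdots\cup W_{b_{j-1}}[t_j]}$ into $\U^j_k$.

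Verification then mirrors Lemma~\ref{lem_ext}. If $E\in X_c(k)$ has all its indices in $I$, I take $a$ an index of $E$; the recursive definition forces $e(a,\vec b)$ to undergo exactly the number of flips dictated by its final membership in $I$, which places $E$ in exactly those $\U^j_k$ matching the parity pattern of $D_n$. Conversely, if $E\in D_n(\U^1_k,\ldots,\U^n_k)$ but no index of $E$ is in $I$, a witnessing snapshot is used to exhibit an index of $E$ that actually lies in $I$, a contradiction. The main obstacle will be the combinatorial bookkeeping: coordinating the flips of $e(a,\vec b)$ uniformly over the grid of tuples $\vec b$ with $b_i\leq b_k$ and checking that the alternation pattern of $D_n$ matches the parity of the final number of flips. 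I do not expect any new conceptual difficulty beyond Lemma~\ref{lem_ext}, but a careful iterative formulation, possibly via induction on $n$, will be needed to keep the notation manageable.
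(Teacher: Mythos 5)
Your plan is essentially the paper's proof: embed $\X$ into $\P(\N)$, generalize Lemma~\ref{lem_ext} via the Recursion Theorem by merging an auxiliary c.e.\ set at each successive change of the approximated index set, and build the open sets $\U^j_k$ level by level using the computable bound $b_k$ on indices of elements of complexity at most $k$. There are, however, three bookkeeping slips that you should fix, all correctable without new ideas. First, you need $n$ auxiliary parameters, not $n-1$: an $n$-c.e.\ approximation can change its mind up to $n$ times, so there are up to $n$ merge events; your count gives zero parameters at $n=1$, contradicting Lemma~\ref{lem_ext} itself, which uses one. The paper accordingly works with tuples $\overline{a}=(a_0,\ldots,a_n)\in\N^{n+1}$ and normalizes the $n$-c.e.\ presentation as a decreasing chain of c.e.\ sets $A_0\supseteq\cdots\supseteq A_{n-1}$, the $j$-th merge being triggered by $e(\overline{a})$ entering $A_{j-1}$. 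Second, your displayed formula for the content after $j$ flips freezes everything; the last merged set must stay \emph{live}, i.e.\ $W_{e(a,\vec b)}=W_a[t_1]\cup W_{b_1}[t_2]\cup\cdots\cup W_{b_{j-1}}$ until the next flip (as in the paper's case distinction), since otherwise the diagonal index can never come to denote the adversary set $E$ and the contradiction step in the verification collapses. Third, the acceptance condition for $\U^j_k$ should be stated per accepted prefix, with universal quantification only over the remaining tail coordinates bounded by $b_k$ (the paper: $a_0,\ldots,a_{i-1}$ accepted with times $t_0,\ldots,t_{i-1}$, then accept $a_i$ if $e(a_0,\ldots,a_n)\in A_i$ for all $a_{i+1},\ldots,a_n\leq b_k$); your condition quantifies over the whole grid of tuples at once, which is both too strong (tuples that stabilize early need never flip $j$ times, breaking the forward direction) and ambiguous about which tuple's flip times determine the cylinder $\up{W_a[t_1]\cup\cdots}$ that gets enumerated.
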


\begin{proof}
Again we can assume w.l.o.g.\ that $\X$ is a subspace of $\P(\N)$. Let $A_0\supseteq A_1\supseteq\ldots \supseteq A_{n-1}$ such that if $W_e\in X_c$ then $W_e\in A\iff e\in A_0\setminus A_1\cup A_2\setminus A_3\ldots$. We denote a tuple $(a_0,\ldots,a_n)\in\N^{n+1}$ by $\overline{a}$. There is a computable function $e(\overline{a})$ such that
\begin{equation*}
W_{e(\overline{a})}=\begin{cases}
W_{a_0}&\text{if $e(\overline{a})\notin A_0$,}\\
W_{a_0}[t_0]\cup W_{a_1}&\text{if $e(\overline{a})\in A_0[\at t_0]\setminus A_1$,}\\
W_{a_0}[t_0]\cup W_{a_1}[t_1]\cup W_{a_2}&\text{if $e(\overline{a})\in A_0[\at t_0]\cap A_1[\at t_1]\setminus A_2$,}\\
\ldots&\\
W_{a_0}[t_0]\cup W_{a_1}[t_1]\cup \ldots\cup W_{a_{n-1}}&\text{if $e(\overline{a})\in A_0[\at t_0]\cap \ldots \cap A_{n-2}[\at t_{n-2}]\setminus A_{n-1}$,}\\
W_{a_0}[t_0]\cup W_{a_1}[t_1]\cup \ldots\cup W_{a_{n}}&\text{if $e(\overline{a})\in A_0[\at t_0]\cap \ldots \cap A_{n-2}[\at t_{n-2}]\cap A_{n-1}[\at t_{n-1}]$.}
\end{cases}
\end{equation*}

Given $k$, let $b_k$ be un upper bound on the indices of elements whose Kolmogorov complexity is at most $k$. If $a_0$ is such that for all $a_1,\ldots,a_{n}\leq b_k$, $e(a_0,\ldots,a_{n})\in A_0$ then let $t_{0}$ be minimal such that all these numbers belong to $A_0[t_{0}]$, enumerate $\up{W_{a_0}[t_0]}$ in $U_0$. By induction, let $1\leq i<n$ and assume $a_0,\ldots,a_{i-1}$ have been accepted with $t_0,\ldots,t_{i-1}$. If $a_i$ is such that for all $a_{i+1},\ldots,a_{n}\leq b_k$, $e(a_0,\ldots,a_{n})\in A_i$ then let $t_i$ be minimal such that all these numbers belong to $A_i[t_i]$ and enumerate $\up{W_{a_0}[t_0]\cup\ldots\cup W_{a_i}[t_i]}$ in $U_i$.

In the same way, one checks that if $W_e\in X_c(k)$ then $W_e\in A\iff W_e\in U_0\setminus U_1\cup U_2\setminus U_3\cup \ldots$.
\end{proof}

It is known from \cite{Selivanov84} that there exists a \prog $2$-c.e.\ subset of $\P(\N)$ that is not even $\Pi^0_2$. Hence \prog $2$-c.e.\ sets are not the differences of \prog semidecidable sets.

In the following theorem, we need to assume an additional property on the space $\X$. Namely, that the domain of the standard representation on $\X$ is a $\Pi^0_2$ set.  This is case for example for the so called \emph{quasi-Polish spaces} (see \cite{Brecht13}).  

\begin{theorem}\label{thm_sigma2}
A set $A\subseteq X_c$ is \prog $\Sigma^0_2$ iff it is \K $\Sigma^0_2$. More precisely,  the set of indices of elements of $A$ is $\Sigma^0_2$ on the set of indices of $X_c$ iff there exist uniformly effective open sets $\U^n_k, \V^n_k$ such that $A\cap X_c(k)=\bigcup_n (\U^n_k\setminus \V^n_k)\cap X_c(k)$.
\end{theorem}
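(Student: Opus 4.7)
The easy direction (\K{} $\Sigma^0_2 \Rightarrow$ \prog{} $\Sigma^0_2$) is a routine check: from a \prog{} name $e$ of $E\in X_c$ one recovers both a \oracle{} name of $E$ and an upper bound $k=|e|$ on $K(E)$; since each $\U^n_k,\V^n_k$ is \oracle{}-semidecidable uniformly in $(n,k)$, the predicate ``$\exists n:\, E\in \U^n_k\wedge E\notin \V^n_k$'' translates into a $\Sigma^0_2$ computation on $e$.

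For the hard direction, my plan is to extend the Recursion-Theoretic construction of Theorem \ref{thm_nce} from the finite $n$-c.e.\ pattern to the countable $\Sigma^0_2$ pattern, applying the 2-c.e.\ machinery of that theorem uniformly for each summand $n$. I identify $\X$ with a subspace of $\P(\N)$; by the $\Pi^0_2$-domain hypothesis, the set $P$ of indices $e$ with $W_e\in X_c$ is $\Pi^0_2$, say $P=\bigcap_n R_n$ for uniformly c.e.\ $R_n$. Let $I$ be a $\Sigma^0_2$ subset of $\N$ with $e\in I\iff W_e\in A$ whenever $W_e\in X_c$, written as $I=\bigcup_n (U_n\setminus V_n)$ for uniformly c.e.\ $U_n\supseteq V_n$. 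For each $n$ I use the Recursion Theorem to produce a computable function $e_n(a_0,a_1,a_2)$ with
$$W_{e_n(\overline{a})}=\begin{cases} W_{a_0} & \text{if } e_n(\overline{a})\notin U_n,\\ W_{a_0}[t_0]\cup W_{a_1} & \text{if } e_n(\overline{a})\in U_n[\at t_0]\setminus V_n,\\ W_{a_0}[t_0]\cup W_{a_1}[t_1]\cup W_{a_2} & \text{if } e_n(\overline{a})\in V_n[\at t_1], \end{cases}$$
matching the 2-c.e.\ scheme of Theorem \ref{thm_nce}. For each $k$ I then enumerate $[W_{a_0}[t_0]]$ into $\U^n_k$ whenever $e_n(a_0,a_1,a_2)\in U_n[t_0]$ for all $a_1,a_2\leq b_k$ with $t_0$ minimal, and $[W_{a_0}[t_0]\cup W_{a_1}[t_1]]$ into $\V^n_k$ whenever, in addition, $e_n(a_0,a_1,a_2)\in V_n[t_1]$ for all $a_2\leq b_k$ with $t_1$ minimal.

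The hard part will be verifying the equality $A\cap X_c(k)=\bigcup_n(\U^n_k\setminus \V^n_k)\cap X_c(k)$. The $\supseteq$ inclusion follows exactly as in Theorem \ref{thm_nce}: any RT-produced fake index that falls in a ``wrong'' case would be an index of some $E\notin A$ with properties contradicting the \prog{}-invariance of membership in $I$. The main obstacle, compared to Theorem \ref{thm_nce}, is the $\subseteq$ inclusion: the decomposition $I=\bigcup_n(U_n\setminus V_n)$ lacks the nesting $A_0\supseteq A_1$ that made the finite case close automatically, so for $a_0\in U_n\setminus V_n$ an index of $E\in A$, knowing only $e_n(\overline a)\in I$ yields $e_n(\overline a)\in U_m\setminus V_m$ for \emph{some} $m$, not necessarily $m=n$, and different indices of the same $E$ may witness $I$ at different levels. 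This is where the $\Pi^0_2$-domain hypothesis becomes essential: by intersecting the $U_n$ and $V_n$ with the c.e.\ approximants $R_n$ of $P$ (and dually absorbing $P^c$ into the decomposition), fake indices whose behavior would break the RT forcing can be pushed outside $X_c$, where the \prog{} semantics no longer constrains them. With this refinement, the RT argument closes exactly as in the 2-c.e.\ case of Theorem \ref{thm_nce}, uniformly in $n$, yielding the desired \K{} $\Sigma^0_2$ presentation.
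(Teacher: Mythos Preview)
Your identification of the obstacle is correct, but the proposed fix does not work, and the paper's route is entirely different.

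The problem with your fix: when $a_0$ is an index of some $E\in A\cap X_c(k)$ and $e_n(a_0,a_1,a_2)\notin U_n$, your construction makes $W_{e_n(\overline a)}=W_{a_0}=E$, so $e_n(\overline a)$ is a \emph{genuine} index of $E\in X_c$, not a ``fake'' one. It lies in $P$, not in $P^c$; intersecting $U_n,V_n$ with the approximants $R_n$ of $P$, or absorbing $P^c$ into $I$, does nothing to it. All you can conclude is $e_n(\overline a)\in U_m\setminus V_m$ for some $m$, possibly $m\neq n$, and a per-level Recursion-Theorem construction has no leverage to force $m=n$. In Theorem~\ref{thm_nce} the nesting $A_0\supseteq A_1\supseteq\cdots$ is exactly what made ``being in $I$'' equivalent to ``being in $A_0$'' at the first step; without an analogous structural constraint the argument does not close, and the $\Pi^0_2$-domain hypothesis does not supply one.

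The paper bypasses this with a much simpler argument that uses no Recursion Theorem at all. It passes to complements, showing \prog{} $\Pi^0_2$ implies \K{} $\Pi^0_2$. The point is that for fixed $k$ the set $A\cap X_c(k)$ is a \emph{finite} set of computable points, namely $\bigcup_{i\leq b_k}C_i$ where $C_i=\{x_i\}$ if $i$ is a valid index lying in the $\Pi^0_2$ index set $P$ for $A$, and $C_i=\emptyset$ otherwise. The $\Pi^0_2$-domain hypothesis is used precisely to make ``$i$ is a valid index'' a $\Pi^0_2$ condition, so that each $C_i$ is a $\Pi^0_2$ subset of $\X$ uniformly in $i$; a finite union of uniformly $\Pi^0_2$ sets is again $\Pi^0_2$, uniformly in $k$. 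No forcing of witnesses is needed because finiteness and closure of $\Pi^0_2$ under finite unions do all the work.
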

\begin{remark} In case $\X$ is a Polish space, the sets $\V_{k}$ are not needed and therefore the last part of the statement reads $A\cap X_c(k)=\bigcup_n \U^n_k\cap X_c(k)$.
\end{remark}

\begin{proof}[Proof of Theorem \ref{thm_sigma2}]
We show that if $A$ is \prog $\Pi^0_2$ then $A$ is \K $\Pi^0_2$, which is equivalent to the statement by replacing $A$ with its complement. We use the numbering $x_e=x$ if $W_e$ is the set of indices of basic neighborhoods of $x$. Let $P=\bigcap_n P_n\subseteq\N$ be $\Pi^0_2$ ($P_n$ are uniformly c.e.) such that if $e$ is an index of $x\in X_c$ then $x\in A\iff e\in P$. The assumption about the space implies that the set of indices of elements of $X_c$ is a $\Pi^0_2$-set $Q=\bigcap_n Q_n\subseteq\N$, where $Q_n$ are uniformly c.e.\ sets.

Given $i$, let $C_i=\{x_i\}$ if $i$ is an index and $i\in P\cap Q$, $C_i=\emptyset$ otherwise. $C_i$ is $\Pi^0_2$, uniformly in $i$. Indeed, for each $n$, define the uniformly effective open sets
\begin{equation*}
(\U_{2n},\V_{2n})=\begin{cases}
(B_n,\emptyset) &\text{if $n\notin W_i$,}\\
(X,B_n) & \text{if $n\in W_i$},
\end{cases}
\end{equation*}
and
\begin{equation*}
(\U_{2n+1},\V_{2n+1})=\begin{cases}
(X,\emptyset) &\text{if $i\notin P_n\cap Q_n$,}\\
(X,X) & \text{if $i\in P_n\cap Q_n$}.
\end{cases}
\end{equation*}
One has $C_i=\bigcap_n(\U_n\setminus \V_n)^c$.

Given $k\in\N$, compute $b_k$ such that every element of Kolmogorov complexity at most $k$ has an index $\leq b_k$. The set $\bigcup_{i\leq b_k}C_i$ is $\Pi^0_2$, uniformly in $k$. This set is exactly $A\cap X_c(k)$.
\end{proof}


\section{Structure of \prog semidecidable sets}\label{sec_structure}

Here we provide several results that shed light on the computational and topological structure of \prog semidecidable sets. Our first result shows that \prog semidecidable sets share some of the nice properties of \oracle semidecidable sets.

\begin{proposition}\label{prop_dense}
Assume that $\X$ contains a dense computable sequence. Given a \prog semidecidable set $A$, it is semi-decidable whether $A$ is non-empty. If $A$ is non-empty, one can compute a sequence of points $\{x_{i}\} \subseteq A$ which is dense in $A$. 
\end{proposition}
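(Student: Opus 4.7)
The plan is to reduce to the $\K$ setting via Theorem \ref{thm_semidec}: given a \prog semi-decider for $A$, one uniformly obtains effective open sets $\U_k \subseteq \X$ satisfying $A \cap X_c(k) = \U_k \cap X_c(k)$. Let $(y_n)_n$ be the dense computable sequence of $\X$; from each program for $y_n$ I read off a \oracle name $f_n$ and a computable upper bound $k_n \geq K(y_n)$.

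To semi-decide non-emptiness, my procedure runs in parallel, over all $n$, the \oracle-semidecidable test ``$y_n \in \U_{k_n}$'' and halts as soon as any succeeds. Since $y_n \in X_c(k_n)$, each such test is equivalent to $y_n \in A$, so the procedure semi-decides $\exists n : y_n \in A$, which clearly implies $A \neq \emptyset$. For the converse, I pick $x \in A$ and let $a$ be its index under the embedding $\X \hookrightarrow \P(\N)$. Since every $e(a,b)$ lies in the c.e.\ set $I$ underlying $A$, the index $a$ is good at every level $k$ in Lemma \ref{lem_ext}'s construction, so $\U_k$ contains a Scott-basic open set $\up{W_a[t_k]}$ formed from a finite initial enumeration of $x$. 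By density of $(y_n)$ in $\X$, some $y_n$ falls into this upward-closed set, and the containment $W_a[t_k] \subseteq y_n$ transfers to the appropriate stage at level $k_n$, placing $y_n$ in $\U_{k_n}$ and hence in $A$.

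For the second claim, I note that for each basic open $\B_i$ of $\X$, the set $A \cap \B_i$ remains \prog semidecidable (the \prog notion is closed under intersecting with \oracle-semidecidable sets). Applying the first claim uniformly in $i$, I obtain a procedure that, whenever $A \cap \B_i$ is non-empty, produces an explicit witness $y_n \in A \cap \B_i$. Enumerating these witnesses over all $i$ yields a computable sequence in $A$ that meets every non-empty basic open piece of $A$, which is therefore dense in $A$ for the subspace topology.

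The main obstacle is the converse direction of the non-emptiness claim: density immediately locates some $y_n$ in an open neighborhood of a witness $x \in A$, but certifying $y_n \in A$ requires $y_n \in \U_{k_n}$, where $k_n$ is tied to $y_n$'s own complexity rather than $K(x)$. The resolution exploits the upward-closed structure of the Scott-basic pieces $\up{W_a[t_k]}$ from Lemma \ref{lem_ext}: a $y_n$ sitting above a sufficiently long finite enumeration $W_a[t_k]$ of $x$ inherits the same containment at level $k_n$ once the search has reached a stage matching the construction's bookkeeping for complexity $\leq k_n$.
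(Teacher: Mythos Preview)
Your reduction of the second claim to the first (intersect with basic opens and enumerate witnesses) matches the paper. The forward direction of the non-emptiness test is also fine: if your search accepts some $y_n$ via $y_n\in\U_{k_n}$, then $y_n\in A$, so $A\neq\emptyset$.

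The gap is in the converse direction. You must show that $A\neq\emptyset$ forces some $y_n\in A$, and your ``transfer'' step does not do this. Concretely: from $x\in A$ with index $a$ you get, for each $k$, a neighborhood $V_k=[W_a[t_k(a)]]\subseteq\U_k$ (and the $\U_k$ produced by Lemma~\ref{lem_ext} are decreasing in $k$). Density gives you some $y_n\in V_k$, but to conclude $y_n\in A$ you need $y_n\in\U_{k_n}$, which via your mechanism means $y_n\in V_{k_n}$, i.e.\ $W_a[t_{k_n}(a)]\subseteq y_n$. Since $t_k(a)$ increases with $k$, this only follows from $y_n\in V_k$ when $k\geq k_n$; but $k_n$ depends on the very $n$ you found from $k$. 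Your last paragraph acknowledges this circularity without breaking it: nothing in Lemma~\ref{lem_ext} controls how fast $t_k(a)$ (hence how small $V_k$) grows relative to $k$, so there is no reason the search ``eventually reaches a stage matching the bookkeeping for complexity $\leq k_n$'' should terminate.

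The paper sidesteps this entirely with a direct Recursion-theorem construction tailored to the dense sequence: build $e(a)$ so that $x_{e(a)}=x_a$ while $e(a)$ is not accepted, and as soon as $e(a)$ is accepted at time $t$, redefine $x_{e(a)}$ to be a point of the dense sequence lying in the neighborhood of $x_a$ revealed by stage $t$. If $x_a\in A$, then $e(a)$ cannot stay unaccepted (else $e(a)$ would be an index of $x_a$ outside $I$), so $e(a)$ is accepted and $x_{e(a)}$ is by construction some $y_n$ whose index $e(a)$ lies in $I$, i.e.\ $y_n\in A$. This gives both the semidecision and the witness in one stroke, without any comparison between $K(y_n)$ and $K(x)$. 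Your route through the $\U_k$'s can be repaired, but only by re-inserting an argument of this Recursion-theorem type; the $\K$-description alone does not furnish it.
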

\begin{proof}
Using the Recursion theorem, there is a computable function $e(a)$ such that $x_{e(a)}=x_a$ if $e(a)\notin A$,  or $x_{e(a)}$ is some point from the dense sequence in some neighborhood of $x_a$ if $e(a)\in A$ at time $t$. $A$ is non-empty iff there is $a$ such that $e(a)\in A$. 
When $A$ is non-empty, one can compute an element in $A$: look for $a$ such that $e(a)\in A$, $x_{e(a)}$ is such a point. To get a computable dense sequence, apply this argument in parallel to the intersection of $A$ with each basic open set $\B_{i}$.
\end{proof}

The following result provides an upper bound on the effective Borel complexity of \prog semidecidable sets. 
\begin{proposition}\label{prop_pi02}
Let $A\subseteq X_{c}$ be \prog semidecidable. There exist uniformly effective open sets $\U_k\subseteq\X$ such that $A=\bigcap_k\U_k\cap X_c$.
\end{proposition}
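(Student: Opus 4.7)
The plan is to extract the desired statement directly from Lemma 3.1 (lem\_ext), which already gives a one-sided version of the equivalence. Examining that lemma carefully, the family $\U_k$ constructed there satisfies the asymmetric property that \emph{every} c.e.\ set whose indices all lie in the c.e.\ set $A$ belongs to $\U_k$ for every $k$ (not just for $k \geq K(E)$), while sets with no index in $A$ are excluded from $\U_k$ only once $k \geq K(E)$. This asymmetry is exactly what is needed.

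First, following the reduction used in the proof of Theorem 3.1, I would embed $\X$ as a subspace of $\P(\N)$ via $x \mapsto \{i \in \N : x \in \B_i\}$, so it suffices to work inside $\P(\N)$. Since $A \subseteq X_c$ is \prog semidecidable, there is a c.e.\ set $I \subseteq \N$ such that for any $e$ with $W_e \in X_c$, one has $W_e \in A$ iff $e \in I$. Consequently, for every $E \in X_c$, either all indices of $E$ belong to $I$ (when $E \in A$) or no index of $E$ belongs to $I$ (when $E \notin A$).

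Now apply Lemma 3.1 to the c.e.\ set $I$ to obtain uniformly effective Scott open sets $\U_k \subseteq \P(\N)$. By part (i) of the lemma, every $E \in A$ satisfies $E \in \U_k$ for all $k$, so $A \subseteq X_c \cap \bigcap_k \U_k$. Conversely, if $x \in X_c \setminus A$, then by part (ii) of the lemma applied with $k = K(x)$, we have $x \notin \U_{K(x)}$, hence $x \notin \bigcap_k \U_k$. This yields $A = X_c \cap \bigcap_k \U_k$, as desired.

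There is essentially no obstacle here beyond recognising that the inclusion $A \subseteq \U_k$ in Lemma 3.1 is \emph{uniform in $k$} (independent of the Kolmogorov complexity), whereas the reverse inclusion $\U_k \cap X_c \subseteq A$ only holds after restriction to $X_c(k)$. It is precisely this asymmetry that allows the intersection $\bigcap_k \U_k$ to recover $A$ on all of $X_c$: any element outside $A$ has some finite Kolmogorov complexity that serves as a witness level $k$ at which $\U_k$ refuses to contain it.
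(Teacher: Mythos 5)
Your proposal is correct and follows essentially the same route as the paper: the paper's proof simply takes the sets $\U_k$ already constructed in the proof of Theorem \ref{thm_semidec} (i.e.\ the embedding of $\X$ into $\P(\N)$ followed by Lemma \ref{lem_ext} applied to the c.e.\ set $I$), which is exactly the construction you spell out, and then argues that $A\subseteq\U_k$ for all $k$ while any $x\in X_c\cap\bigcap_k\U_k$ lies in $X_c(k)\cap\U_k=X_c(k)\cap A$ for $k\geq K(x)$ --- the contrapositive of your use of part (ii). Your observation about the asymmetry in Lemma \ref{lem_ext} (inclusion of $A$ holds for all $k$, exclusion only for $k\geq K(E)$) is precisely the point the paper exploits.
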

\begin{proof}
Let $\U_k$ be the effective open sets from the proof of Theorem \ref{thm_semidec} and define $\S=\bigcap_k \U_k $. We already know that $A\subseteq \U_k$ for all $k$. If $x\in X_{c}\cap \S$ then let $k\geq K(x)$. Since $x\in X_{c}(k)\cap \U_k=X_{c}(k)\cap A$, we conclude that $x\in A$.
\end{proof}

The result above is actually tight, as the following theorem shows. For a finite string $u$, let us define the monotone complexity $\Km(u)$ of $u$ as the length of a shortest program computing a (finite or infinite) binary sequence extending $u$. The program writes its output on a one-way output tape and may never halt. Again the precise definition of $\Km(u)$ (Levin or Schnorr monotone or process complexity) does not make any difference for our purposes. The only important property is that for a computable sequence $x$, $\Km(x\res{n})\leq Km(x)$ for all $n$. For the seek of completeness, let us recall original Friedberg's example. We present it in a way that is more convenient for our purposes.  

\begin{theorem}[Friedberg]
On the Cantor space, the set 
\begin{equation*}
\A=\{0^\omega\}\cup\bigcup_{n:Km(0^n1) < \log(n) - 1}[0^n1].
\end{equation*}
is  \prog semidecidable but not open.  Hence the Ershov topology is strictly stronger than the Cantor topology. 
\end{theorem}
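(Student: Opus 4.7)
My plan is to separate the two claims: (a) construct a \prog semidecision procedure for $\A$, and (b) show that $\A$ is not Cantor-open by exhibiting computable counterexamples arbitrarily close to $0^\omega$.

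For (a), given a \prog name $p$ of a computable $x\in\cantor$, I first convert $\varphi_p$ into a monotone program of size $L=|p|+O(1)$ writing $x$ on a one-way tape, so that $\Km(u)\leq L$ for every prefix $u$ of $x$. The procedure then runs two processes in parallel. Process~1 enumerates the c.e.\ set $S=\{n:\Km(0^n1)<\log n-1\}$ and, whenever a new $n$ enters $S$, waits for $\varphi_p$ to produce the prefix $0^n1$ and halts if this occurs. Process~2 simulates $\varphi_p$ bit by bit and halts as soon as $\varphi_p(0)=\cdots=\varphi_p(2^{L+1})=0$. Correctness is by a three-case analysis on the location of the first $1$ in $x$: if $x=0^\omega$, Process~2 halts; if $x\in[0^n1]$ with $n\in S$, Process~1 halts; if $x\in[0^n1]$ with $n\notin S$, then $\log n-1\leq \Km(0^n1)\leq L$ forces $n\leq 2^{L+1}$, so $\varphi_p$ reveals the $1$ at position $n$ before Process~2 could fire, and Process~1 only triggers at the unique $n$ such that $0^n1$ is a prefix of $x$, which is excluded from $S$.

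For (b), any open Cantor-neighborhood of $0^\omega$ contains a basic cylinder $[0^N]$. The usual counting lower bound $\Km(0^n1)\geq \log n-O(1)$ valid for all but finitely many $n$ guarantees that $\N\setminus S$ is infinite; picking any $n>N$ with $n\notin S$, the computable point $0^n1\,0^\omega$ lies in $[0^N]\cap X_c$ but not in $\A$, so no Cantor-open neighborhood of $0^\omega$ is contained in $\A$. Together with (a) this shows $\A$ is open in Ershov's topology but not in Cantor's, so Ershov's topology is strictly finer. The main obstacle I anticipate is justifying cleanly the uniform inequality $\Km(u)\leq L$ for prefixes $u$ of $x$: a \prog name is an index of a partial computable function queried coordinate by coordinate, whereas $\Km$ is defined via one-way monotone output, so a small uniform conversion is needed. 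Once this is in place, the threshold $2^{L+1}$ in Process~2 is exactly what the arithmetic of $\log n - 1 \leq \Km(0^n1)\leq L$ demands, and the rest reduces to routine c.e.\ enumeration and a counting argument.
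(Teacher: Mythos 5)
Your proof is correct and is essentially the paper's argument with the \K{} mechanism inlined: your bound $L=|p|+O(1)$ plays exactly the role of the complexity bound $k$ in the paper's proof, and your Process~2 threshold $2^{L+1}$ is the paper's prescription to read the first $e=2^{k+2}$ bits and accept on all zeros (sound for the same arithmetic reason: a first $1$ beyond the threshold forces $\log n - 1 > \Km(0^n1)$, so such $x$ lies in $\A$ anyway), while your part (b) just makes explicit the non-openness argument the paper leaves implicit. One small repair: the inequality $\Km(0^n1)\geq \log n - O(1)$ does \emph{not} hold for all but finitely many $n$ (e.g.\ $n=2^{2^m}$ gives $\Km(0^n1)=O(\log m)$, far below $\log n$); what the counting argument actually gives, using that the strings $0^n1$ are pairwise incomparable so each monotone program witnesses at most one of them, is that each dyadic block $[2^k,2^{k+1})$ contains at least one $n$ with $\Km(0^n1)\geq\log n-1$, and this weaker fact already makes $\N\setminus S$ infinite, which is all your part (b) needs.
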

\begin{proof} We show that $\A$ is \K semidecidable. Given an infinite binary sequence $x$ (a \oracle description) and a bound $k$ on $K(x)$, we only need to read the first $e=2^{k+2}$ bits of $x$. If we see only zeros, we accept. Otherwise one gets $0^n1\ldots$ for some $n<e$, then test whether $Km(0^n1)< \log(n)-1$. 
\end{proof}

\begin{remark}
Friedberg's example happens to be $\Sigma^0_2$. It is an effective open set appended with a limit point.  We strengthen  Friedberg's example by constructing a \prog semidecidable set which is far from being open. 
\end{remark}

\begin{theorem}\label{thm_notsigma2}
There is a \prog semidecidable subset of $\cantor_{c}$ that is not $\Sigma^0_2$. It is a non-empty closed subset of $\cantor_{c}$ with empty interior, defined by
\begin{equation*}
A=\{x\in\cantor_{c}:\forall n, \Km(x\res{n})< n/2+c\}\quad\text{ for some sufficiently large $c\in\N$.}
\end{equation*}
\end{theorem}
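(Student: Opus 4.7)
For the basic set-theoretic properties I will argue directly. Non-emptiness: $0^\omega \in A$ once $c$ is chosen large enough, because $\Km(0^n) = O(\log n)$ lies below $n/2 + c$ uniformly in $n$. Closedness: each set $U_n := \{x \in \cantor : \Km(x\res{n}) < n/2 + c\}$ is clopen in $\cantor$---a finite union of cylinders, since $\{u \in \{0,1\}^n : \Km(u) < n/2 + c\}$ is a finite c.e.\ subset of $\{0,1\}^n$---so $A = \cantor_c \cap \bigcap_n U_n$ is closed. Empty interior: given any $u \in \{0,1\}^*$, a Kraft counting argument yields $v \in \{0,1\}^L$ with $\Km(v) \geq L$ for $L := |u| + 2c + O(\log|u|)$; then $x := uv0^\omega$ is a computable element of $[u]$ with $\Km(x\res{|u|+L}) \geq L - O(\log|u|) > (|u|+L)/2 + c$, placing it in $[u] \cap \cantor_c \setminus A$.

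For \prog semidecidability I will appeal to Theorem \ref{thm_semidec} and establish \K semidecidability instead. The key input is the bound $\Km(x\res{n}) \leq K(x) + O(1)$, valid for every $x \in \cantor_c$ and every $n$, since any plain program computing $x$ serves as a monotone program extending $x\res{n}$. Given a \K name $(k, f)$ of $x$, the condition $\Km(x\res{n}) < n/2 + c$ is thus automatic once $n > N(k) := 2(k + O(1) - c)$; my algorithm will read $x\res{N(k)}$ from $f$ and semidecide in parallel the finitely many c.e.\ conditions $\Km(x\res{n}) < n/2 + c$ for $n \leq N(k)$, accepting once they all fire.

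For the failure of $\Sigma^0_2$ I will argue by contradiction. Assuming $A$ is \prog $\Sigma^0_2$, Theorem \ref{thm_sigma2} (with the $\V^n_k$ sets omitted, since $\cantor$ is Polish, per the accompanying remark) yields uniformly effective open sets $\U^n_k \subseteq \cantor$ with $A \cap \cantor_c(k) = W_k \cap \cantor_c(k)$, where $W_k := \bigcup_n \U^n_k$. For every $k \geq K(0^\omega)$ we have $0^\omega \in W_k$, and by waiting for a prefix of $0^\omega$ to appear in the c.e.\ enumeration of basic cylinders of $W_k$ I can compute a value $m_k$ (uniformly in such $k$) with $[0^{m_k}] \subseteq W_k$; consequently every computable $x \in [0^{m_k}]$ with $K(x) \leq k$ is forced into $A$. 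I will then use the Recursion Theorem to build a computable sequence $x^* = 0^{m_{k^*}} v^* 0^\omega$ for a value $k^*$ chosen self-referentially, such that $K(x^*) \leq k^*$ yet $\Km(x^* \res{m_{k^*}+|v^*|}) \geq (m_{k^*}+|v^*|)/2 + c$, contradicting $x^* \in A$. The hardest part will be controlling $m_{k^*}$ in terms of $k^*$: the $\Sigma^0_2$ characterization gives no a priori bound on $m_k$, whereas any $v^*$ computed deterministically from $k^*$ alone inherits complexity only $O(\log k^*)$ and hence cannot push $\Km$ of a prefix above $n/2 + c$ unless $m_{k^*}$ is itself small. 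The approach will be to exploit the Recursion Theorem together with padding---tying the self-referential index to length $\sim k^*$ and spending those ``free'' description-length bits on a monotone-incompressible block $v^*$ of length slightly above $m_{k^*} + 2c + O(\log m_{k^*})$---and this delicate coupling between self-reference and the complexity budget will be the technical heart of the proof.
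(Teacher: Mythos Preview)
Your treatment of non-emptiness, closedness, empty interior, and \K semidecidability is fine and essentially matches the paper's.

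The argument for the failure of $\Sigma^0_2$, however, has a genuine gap. The claim to be refuted is that $A$ is $\Sigma^0_2$ in the \emph{absolute} effective Borel sense: there exists a single $\Sigma^0_2$ set $S\subseteq\cantor$ with $A=S\cap\cantor_c$. You instead assume that $A$ is \emph{Markov}-$\Sigma^0_2$ and invoke Theorem~\ref{thm_sigma2} to obtain the $k$-dependent open sets $\U^n_k$. But $A$ is Markov-semidecidable, hence Markov-$\Sigma^0_1$, hence trivially Markov-$\Sigma^0_2$; your hypothesis is always satisfied and cannot lead to a contradiction. Concretely, the only consequence you extract from the assumption---a computable function $k\mapsto m_k$ with $[0^{m_k}]\cap\cantor_c(k)\subseteq A$---already follows from the \K semidecidability you just proved (indeed $m_k=2(k-c)$ works, since $\U_k$ depends only on the first $2(k-c)$ bits and contains $0^\omega$). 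Thus, if your self-referential construction of $x^*\in[0^{m_{k^*}}]\cap\cantor_c(k^*)\setminus A$ succeeded, it would contradict the \K semidecidability itself. The difficulty you flag at the end (controlling $m_{k^*}$ against the complexity budget) is therefore not a technicality to be overcome but the manifestation of this impossibility.

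The paper's route is entirely different and does use the genuine $\Sigma^0_2$ hypothesis $A=\bigcup_i\P_i\cap\cantor_c$ with $\P_i$ uniformly $\Pi^0_1$. From the empty-interior property and Kreisel--Lacombe--Sh\oe nfield/Ceitin it shows each $\P_i$ is nowhere dense in $A$; then, using Propositions~\ref{prop_dense} and~\ref{prop_pi02} to realise $A$ as exactly the computable points of a $\Pi^0_2$ set $\S'$ carrying a dense computable sequence, it runs a computable Baire category argument on $\S'$ to produce a computable point of $A$ missed by every $\P_i$. The Baire-category structure is the key idea your proposal lacks.
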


\begin{proof}
We choose $c$ such that for some computable sequence $x$, $K(x)\leq c$, hence $A$ is non-empty as it contains $x$. We first show that $A$ is \K semidecidable. First, the function $u\mapsto \Km(u)$ is right-c.e. Now, given $x$ and some $k\geq K(x)$, $x\in A$ iff for all $n\leq 2(k-c)$, $\Km(x\res{n})< n/2+c$, as for all $n>2(k-c)$, $\Km(x\res{n})\leq K(x)\leq k<n/2+c$.

Here we denote $\cantor$ by $\X$ and the set of computable sequences by $X_c$. $A$ is a subset of $X_c$. We show that there is no $\Sigma^0_2$-subset of $\X$ whose intersection with $X_c$ is $A$. Let $\overline{A}$ be the closure of $A$ in $\X$ (it might not be $\{x\in\X:\forall n, \Km(x\res{n})< n/2+c\}$). Here is the argument:

\begin{enumerate}
\item $A$ has empty interior in $X_c$, i.e.\ there is no cylinder $[u]$ such that $[u]\cap X_c\subseteq A$. Indeed, given a finite string $u$ and a sufficiently large $k$, for most words $v$ of length $k$, $\Km(uv)\geq |uv|/2+c$ so $[uv]$ is disjoint from $A$.
\item If $\P\subseteq\X$ is a $\Pi^0_1$-set and $\P\cap X_c\subseteq A$ then $\P$ is nowhere dense in $A$. Indeed, if there exists a cylinder $[u]$ such that $\emptyset\neq A\cap [u]\subseteq \P$ then $A\cap [u]=\P\cap [u]\cap X_c$ is both \prog semidecidable and \prog co-semidecidable hence by Kreisel-Lacombe-Sh\oe nfield/Ceitin theorem it is clopen on $X_c$, so $A$ has non-empty interior in $X_c$, contradicting the first point.
\item By Proposition \ref{prop_dense}, $\overline{A}$ is a c.e.\ closed subset of $\X$ (it contains a dense computable sequence) hence a $\Pi^0_2$-set. Let $\S\subseteq\X$ be the $\Pi^0_2$-set given by Proposition \ref{prop_pi02}, satisfying $A=\S\cap X_c$. Let $\S'=\overline{A}\cap \S$. $\S'$ is a $\Pi^0_2$-set which contains a dense computable sequence, and $A$ is exactly the set of computable elements of $\S'$. From this it follows that computable Baire theorem holds on $\S'$: if the sets $\P_i$ are uniformly $\Pi^0_1$-sets that have empty interior in $\S'$ then one can compute some $x$ in $\S'\setminus \bigcup_i\P_i$.
\item Now, if $\P_i$ are uniformly $\Pi^0_1$-sets such that each $\P_i\cap X_c$ is contained in $A$ then by the second point $\P_i$ has empty interior in $A$, and also in $\S'\subseteq A$, so by the third point one can compute some $x$ in $\S'\setminus \bigcup_i\P_i$. As $x$ is computable and belongs to $\S'$, $x$ belongs to $A$ so $\bigcup_i \P_i$ does not cover $A$.\qedhere
\end{enumerate}
\end{proof}

For the following results, we restrict our attention to the space $\X=\overline{\N}=\N\cup\{\infty\}$ whose topology is generated by the singletons $\{n\}$  and the semi-lines $[n,\infty]$, for $n\in \N$. Note that every point in this space is computable, so that $\X = X_{c}$.

Friedberg's example translated to this space reads  $\{x\in\overline{\N}: K(x) < \log(x) - 1\}$, which inspires the following definition.

\begin{definition}
We define the  \defin{Friedberg sets} of $\overline{\N}$ to be the ones of the form $\{x\in\overline{\N}: K(x) < h(x)\}$, where $h:\N\to\N$ is any computable order,  namely, any non decreasing unbounded computable function. 
\end{definition}
 Note that a computable order can always be extended to a computable function $h:\overline{\N}\to\overline{\N}$, with $h(\infty)=\infty$.
 
  Friedberg sets are \prog semidecidable just like the original set. The next two results show that, unlike Cantor space, the only \prog semidecidable sets over $\overline{\N}$ which are not \oracle semidecidable are essentially the  Friedberg sets. 

\begin{proposition}\label{friedberg}
If $A\subseteq\overline{\N}$ is \prog semidecidable and contains $\infty$ then there is a computable order $h$ such that $A$ contains a Friedberg set. \end{proposition}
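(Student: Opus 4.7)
The plan is to route through Theorem \ref{thm_semidec}, converting \prog{} semidecidability into its \K{} version, and then extract the desired order $h$ from the effective open sets that witness it. Concretely, Theorem \ref{thm_semidec} gives uniformly effective open sets $\U_k\subseteq\overline{\N}$ with $A\cap \overline{\N}(k)=\U_k\cap\overline{\N}(k)$. Fix an explicit upper bound $k_0\geq K(\infty)$ (such a $k_0$ exists since the trivial program enumerating the $[n,\infty]$ in order has constant length). For every $k\geq k_0$ we have $\infty\in\U_k$, and since the basic neighborhoods of $\infty$ are exactly the semi-lines $[n,\infty]$, enumerating the effective basis of $\U_k$ until one of these appears produces a computable function $k\mapsto N_k$ with $[N_k,\infty]\subseteq\U_k$. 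Replacing $N_k$ by $N_k^\star=\max_{k_0\leq j\leq k} N_j$ makes the sequence nondecreasing without spoiling the inclusion.

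I would then define $h$ as a computable ``inverse'' of $k\mapsto N_k^\star$, namely
\[
h(x)=\max\bigl\{k+1 : k_0\leq k\leq x,\ N_k^\star\leq x\bigr\},
\]
with $h(x)=0$ if the set is empty and $h(\infty)=\infty$. Computability and monotonicity are immediate from the formula. For the key inclusion, if $K(x)<h(x)$ and $h(x)=k+1$, then $K(x)\leq k$, $k\geq k_0$, and $x\geq N_k^\star\geq N_k$, hence $x\in\overline{\N}(k)\cap\U_k=A\cap\overline{\N}(k)\subseteq A$; the point $\infty$ is included in the Friedberg set since $K(\infty)<\infty=h(\infty)$, and $\infty\in A$ by hypothesis.

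The one point that needs care is ensuring $h$ is unbounded regardless of whether the sequence $(N_k^\star)$ is bounded, and this is precisely the reason for artificially capping $k$ by $x$ in the definition. If $(N_k^\star)$ is unbounded, then for any target $K$ one takes $x\geq\max(K,N_K^\star)$ to force $h(x)\geq K+1$. If instead $(N_k^\star)$ stabilizes at some $M$, then for $x\geq\max(k_0,M)$ every $k\in[k_0,x]$ contributes, giving $h(x)=x+1\to\infty$. Thus a single uniform definition handles both regimes, and no case split is required.
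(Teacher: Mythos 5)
Your proposal is correct and follows essentially the same route as the paper: the paper likewise uses \K{} semidecidability to compute, for each $k$, a threshold $p(k)$ with $[p(k),\infty]\cap\{x:K(x)\leq k\}\subseteq A$, takes $p$ increasing, and defines $h(n)=\min\{i:p(i)>n\}$, which is just the inverse of your $k\mapsto N_k^\star$ written as a min rather than a max. Your cap $k\leq x$ plays the same role as the paper's monotonicity assumption on $p$ (guaranteeing that $h$ is total, computable and unbounded), so the two arguments differ only in bookkeeping.
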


\begin{proof}
Since $A$ is \K semidecidable, for each $k$ one can compute $p(k)$ such that $[p(k),\infty]\cap\{x:K(x) \leq  k\}\subseteq A$. One can assume that $p(k)$ is increasing.  Let $h(n)=\min\{i:p(i)> n\}$.  If $n\notin A$ then $p(K(n))>n$ (just take $k=K(n)$), so one has $h(n) \leq K(n)$.
\end{proof}

\begin{remark}Observe that $K(x)$ here coincides with the usual notion of Kolmogorov complexity of natural numbers. 
\end{remark}

Proposition \ref{friedberg} provides a nice characterization of the Ershov's open sets. 

\begin{corollary}
The Ershov topology is generated by the singletons $\{n\}$ and the Friedberg sets. 
\end{corollary}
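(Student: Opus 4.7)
The plan is to show the two inclusions between the Ershov topology and the topology generated by the proposed subbase. Both families consist of \prog semidecidable sets: singletons $\{n\}$ with $n\in\N$ are even open in the standard topology of $\overline{\N}$ (the point $n$ is isolated), and Friedberg sets are \prog semidecidable as noted just before the definition. Hence the topology generated by singletons and Friedberg sets is contained in the Ershov topology.

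For the converse, it suffices to prove that every \prog semidecidable set $A\subseteq\overline{\N}$ is a union of singletons and Friedberg sets, since the Ershov topology is generated (as a subbase) by the \prog semidecidable sets. I would split on whether $\infty\in A$. If $\infty\notin A$ then $A\subseteq\N$ and trivially $A=\bigcup_{n\in A}\{n\}$. If $\infty\in A$, then Proposition \ref{friedberg} produces a computable order $h$ such that the Friedberg set $F_h=\{x\in\overline{\N}:K(x)<h(x)\}$ is contained in $A$. Observing that $\infty\in F_h$ (since $\infty$ is computable, so $K(\infty)<\infty=h(\infty)$), I would then write
\begin{equation*}
A=F_h\cup\bigcup_{n\in A\cap\N}\{n\},
\end{equation*}
with the inclusion $\supseteq$ being immediate and the inclusion $\subseteq$ handled by treating the point $\infty$ (covered by $F_h$) and the finite points (covered by the singletons) separately.

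There is essentially no obstacle here: the whole statement is a direct packaging of Proposition \ref{friedberg}, together with the obvious remark that the only ``new'' point in $\overline{\N}$, compared to $\N$ with the discrete topology, is $\infty$, whose neighborhoods in the Ershov topology are exactly what Friedberg sets describe. The only minor point to be careful about is the membership $\infty\in F_h$ and the fact that a computable order extends to $\overline{\N}$ by $h(\infty)=\infty$, which is already noted in the paper just before Proposition \ref{friedberg}.
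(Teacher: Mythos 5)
Your proof is correct and matches the paper's (implicit) argument: the paper states this corollary without proof precisely because it is the direct packaging of Proposition \ref{friedberg} that you carry out, decomposing a \prog semidecidable set $A$ as a union of singletons when $\infty\notin A$, and as a Friedberg set $F_h\subseteq A$ (with $\infty\in F_h$ since $K(\infty)<\infty=h(\infty)$) together with the singletons of $A\cap\N$ otherwise. Your attention to the extension $h(\infty)=\infty$ and to the reverse inclusion (singletons and Friedberg sets being \prog semidecidable, hence Ershov-open) covers exactly the points the paper leaves to the reader.
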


\begin{remark}Note that the sets $[n,\infty]$ can be expressed as the Friedberg sets  $\{x\in\overline{\N}:K(x)<h(x)\}$ where $h(x)=0$ for $x<n$ and $h(x)=c(x+1)$ for $x\geq n$, where $c$ is such that $K(n)\leq c(n+1)$ for all $n\in\N$.
\end{remark}

Whether or not one can find such a characterization on other spaces such as the Cantor space is an interesting question.

We end this section by observing that, unlike \oracle semidecidable sets, \prog semidecidable sets cannot be effectively enumerated. 

\begin{proposition}\label{prop_enum}
There is no effective enumeration of the \prog semidecidable subsets of $\overline{\N}$.
\end{proposition}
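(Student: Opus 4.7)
The plan is to argue by contradiction. Suppose $\{A_e\}_{e\in\N}$ is an effective enumeration with $\varphi_{f(e)}$ the uniform \prog semidecider of $A_e$.

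First I would observe that $E^+ = \{e : \infty \in A_e\}$ is c.e., since $\infty$ has a computable \prog name on which one can run $\varphi_{f(e)}$. Applying the proof of Proposition~\ref{friedberg} uniformly over $e \in E^+$ yields a uniformly computable family of orders $h_e$ with $F_{h_e} \subseteq A_e$. Equivalently, $\{A_e \cap \N : e \in E^+\}$ is a uniformly c.e.\ family of c.e.\ subsets of $\N$ that exhausts exactly those c.e.\ $W \subseteq \N$ for which $W \cup \{\infty\}$ is \prog semidecidable (call these the ``K-large'' c.e.\ sets).

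The remaining task is to show that no such c.e.\ family can exist. My plan is to construct, via a Friedberg--Muchnik style priority argument combined with the Recursion Theorem, a c.e.\ set $W_a \subseteq \N$ that is K-large (so $W_a \cup \{\infty\}$ is \prog semidecidable) yet satisfies $W_a \neq A_e \cap \N$ for every $e \in E^+$. This will contradict the fact that $W_a \cup \{\infty\}$ must appear as some $A_{e^*}$ with $e^* \in E^+$. Concretely, I would fix a computable order $h$, force $F_h \subseteq W_a$ throughout the construction, and for each $e$ entering the c.e.\ approximation of $E^+$ install a requirement $R_e:W_a \neq A_e\cap\N$, attacked by reserving witnesses $n_e$ of Kolmogorov complexity exceeding $h_e(n_e)$ (so $n_e \notin F_{h_e}\subseteq A_e$ is a priori possible) and toggling their membership in $W_a$ against the observed behaviour of $A_e$; the Recursion Theorem handles the self-reference that $a$ itself may be an index belonging to $E^+$.

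The main obstacle is handling requirements $R_e$ for $e \in E^+$ with $A_e\cap\N = \N$, where no element lies outside $A_e$. The fix is to pre-commit a single element $n^* \notin W_a$ chosen with $K(n^*)\geq h(n^*)$, which preserves $F_h \subseteq W_a$ and simultaneously handles every such requirement by making $W_a \neq \N$. Priority bookkeeping ensures that the finite-complexity witnesses used for the ``normal'' $R_e$ and the single exclusion $n^*$ do not collide, and that K-largeness is preserved throughout the construction.
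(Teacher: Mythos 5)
Your reductions up front are sound and match the paper's opening moves: extracting the c.e.\ set $E^+$ of indices of sets containing $\infty$, obtaining uniformly computable orders $h_e$ from the (uniform) proof of Proposition~\ref{friedberg}, and recasting the problem as the non-existence of a uniformly c.e.\ family exhausting the ``K-large'' c.e.\ sets. The core diagonalization, however, has genuine gaps. First, you cannot ``toggle'' membership in a c.e.\ set: elements can only enter $W_a$, never leave, so the honest version of your strategy is to hold $n_e$ out and wait for $n_e$ to appear in $A_e$ --- and if it never appears you have certified no disagreement at $n_e$, while enumerating $n_e$ into $W_a$ later can simply be copied by $A_e$. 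This is not a repairable bookkeeping issue: under your own contradiction hypothesis the enumeration is complete, so some $e^*\in E^+$ has $A_{e^*}=W_a\cup\{\infty\}$, and the requirement $R_{e^*}$ is literally unsatisfiable; the Recursion Theorem does not help, because the difficulty is not defining $W_a$ self-referentially but that the target family provably contains $W_a$'s own trace. A correct proof must make the disagreement with each $A_e$ \emph{unconditional}, certified by something the adversary cannot copy. Second, your witness selection is not effective: a computable map $e\mapsto n_e$ forces $K(n_e)\leq 2\log e+O(1)$, so you cannot in general ``reserve'' witnesses with $K(n_e)\geq h_e(n_e)$ (a Berry-type obstruction, aggravated by the fact that $h_e$ is adversary-determined and may grow fast); and even granting such witnesses, $n_e\notin F_{h_e}$ only makes $n_e\notin A_e$ \emph{a priori possible}, as you yourself note --- the adversary decides.

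The paper's proof inverts your strategy, and this is the missing idea: it finds witnesses \emph{inside} each $A_e$ rather than outside. Since each $A_e$ is \prog semidecidable and contains $\infty$, Proposition~\ref{prop_dense} (this is where the Recursion Theorem actually enters) yields, uniformly in $e$, a computable increasing $f_e:\N\to\N$ with range contained in $A_e$. Writing $C(x)$ for the minimal index of $x$, the map $C\circ f_e$ is injective, and since fewer than $k$ numbers have $C$-value below $k$, one gets $C(f_e(k))\geq k$ for infinitely many $k$. Now take a computable $f$ eventually dominating every $f_e$ (e.g.\ $f(k)=\max(f_0(k),\ldots,f_k(k))+1$) and set $A=\{x\in\overline{\N}:f(C(x))\leq x\}$, an exact Friedberg-type set, hence \prog semidecidable and containing $\infty$. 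For each $e$ there is then a point $x=f_e(k)\in A_e$ with $x\leq f_e(C(x))<f(C(x))$, so $x\notin A$: the disagreement is certified purely by a counting argument about complexity, with no waiting, no priorities, and no possibility of being copied. To close your gap you would need to import exactly this mechanism --- domination of the computable sequences lying inside each $A_e$, together with the pigeonhole fact that computable injective sequences are incompressible infinitely often.
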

\begin{proof}
Let $A_i$ be a sequence of uniformly \prog semidecidable sets, coming with uniformly c.e.\ sets $E_i\subseteq\N$ such that $E_i\cap\dom(\nu)=\nu^{-1}(A_i)$. One can extract the sets that contain $\infty$ (let $e_0$ be some index of $\infty$, one can enumerate the numbers $i$ such that $E_i$ contains $e_0$), so we can assume w.l.o.g.\ that each $A_i$ contains $\infty$. For each $i$ one can compute an increasing computable function $f_{i}:\N\to\N$ whose range is contained in $A_i$. Now we build a \prog semidecidable set $A$ that contains $\infty$ and differs from each $A_i$. Let $f$ be a computable order such that for each $i$ and all sufficiently large $k$, $f_i(k)<f(k)$ (for instance, $f(k)=\max(f_0(k),\ldots,f_k(k))+1$). Here we use another version of Kolmogorov complexity: $C(x)$ is the minimal index of $x$. We now define $A=\{x\in\overline{\N}:f(C(x))\leq x\}$. $A$ is \prog semidecidable by the usual argument.

We show that $A$ differs from each $A_i$. Let $i\in\N$. As $C\circ f_i$ is one-to-one, there exist infinitely many $k\in\N$ such that $C(f_i(k))\geq k$. Moreover if $k$ is sufficiently large then $f_i(C(f_i(k)))<f(C(f_i(k)))$. Hence there exists $k$ such that $f_i(k)\leq f_i(C(f_i(k)))<f(C(f_i(k))$. Let $x=f_i(k)$: $x\in A_i$ by construction of $f_i$ and $x<f(C(x))$ so $x\notin A$.
\end{proof}

\section{When Markov beats Kolmogorov}\label{sec_negative}

In this section we explore the limits of our methods. We first look at the relativized case, and show that there are simple cases that separate \prog  computability from  \K computability.  However, we also show that, interestingly, the equivalence persists if the space has a Polish structure. 
 
\subsection{Relativization}

Let $\sierp=\{\bot,\top\}$ be the Sierpi\'nski space with topology given by $\{\emptyset, \{\top\}, \{\bot,\top\}\}$.  Note that as $\sierp$ is finite, \K computability is trivially equivalent to \oracle computability simply because all the elements share a common upper bound on their Kolmogorov complexities, which therefore provides no interesting information. Relativizing w.r.t. the Halting set, we can then separate \prog decidability from \oracle decidability, and therefore from \K decidability. 

\begin{remark}\label{prop_sierp_semidec_relative}
 The set $\{\bot\}\subseteq\sierp$ is \prog decidable relative to the halting set but is not \oracle decidable relative to any oracle.
\end{remark}
\begin{proof}
It is not decidable relative to any oracle simply because it is not clopen.
\end{proof}

Similarly, using $\emptyset''$ we can separate, over $\P(\N)$, \K semidecidability from \oracle semidecidability (without oracle, the two notions coincide with \prog decidability by Rice-Shapiro theorem).

\begin{proposition}\label{prop_zero_second} The set $\{\N\}\subseteq \P(\N)$ is \K semidecidable relative to $\emptyset''$ but is not \oracle semidecidable relative to any oracle. 
\end{proposition}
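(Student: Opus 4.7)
\smallskip
\noindent\textbf{Proof proposal.}
The plan is to handle the two halves separately, starting with the easier negative part. The set $\{\N\}$ is not even topologically open in the Scott topology, because any basic open $[F]$ with $F\subseteq\N$ finite contains the finite set $F$ itself (since $F\subseteq F$), hence strictly more than $\{\N\}$; any open set containing $\N$ contains some such $[F]$, so is not $\{\N\}$. Since a set that is \oracle semidecidable relative to an oracle $A$ is a union of basic open sets (those in some $A$-c.e.\ family), it must in particular be topologically open. So $\{\N\}$ cannot be \oracle semidecidable relative to any oracle.

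For the positive part, given a \oracle name of a c.e.\ set $W$ (i.e.\ an enumeration of $W$) together with $k\geq K(W)$, I want an oracle machine with oracle $\emptyset''$ that halts iff $W=\N$. First, compute from $k$ an upper bound $b_k$ such that every computable element of $\P(\N)$ of Kolmogorov complexity at most $k$ has a c.e.\ index $\leq b_k$; thus $W=W_e$ for some $e\leq b_k$. The idea is to reduce the $\Pi^0_2$ condition ``$W=\N$'' to a $\Sigma^0_1$ condition by exploiting the finiteness of this candidate pool.

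Using $\emptyset''$, which decides the $\Pi^0_2$ predicate $W_e=\N$, compute the finite set $B=\{e\leq b_k: W_e\neq\N\}$. For each $e\in B$, use $\emptyset'$ (available from $\emptyset''$) to search for and return some witness $n_e\in\N\setminus W_e$: since $W_e\neq\N$ such a witness exists, and $n\notin W_e$ is $\Pi^0_1$ hence decidable from $\emptyset'$. Now run the semidecision procedure that reads the enumeration of $W$ and halts as soon as every $n_e$ (for $e\in B$) has been enumerated. This is manifestly $\Sigma^0_1$ in the enumeration input. If $W=\N$, every $n_e$ eventually appears and we halt. If $W\neq\N$, then since $K(W)\leq k$ we have $W=W_{e_0}$ for some $e_0\in B$, whence $n_{e_0}\notin W$ and the procedure runs forever.

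The only real obstacle is conceptual rather than technical: it is to see that, even though ``$W=\N$'' along an enumeration is genuinely $\Pi^0_2$ and cannot be semidecided from any weaker oracle in general, the extra information $k\geq K(W)$ restricts $W$ to finitely many candidates, and $\emptyset''$ is just strong enough to preselect one distinguishing witness per bad candidate, after which the verification collapses to $\Sigma^0_1$. A weaker oracle like $\emptyset'$ would still find witnesses but cannot reliably sort the candidates into ``good'' and ``bad'', which is what forces the use of $\emptyset''$.
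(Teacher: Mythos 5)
Your proposal is correct and takes essentially the same approach as the paper: bound the finite pool of candidate indices via $k$, use $\emptyset''$ to decide which candidates $W_e$ differ from $\N$ and extract a witness outside each such $W_e$, then semidecide by waiting for all witnesses to appear in the input enumeration. Your only additions are cosmetic --- you spell out the standard topological argument for the negative half (which the paper leaves implicit) and observe that $\emptyset'$ already suffices for the witness search once $\emptyset''$ has sorted the candidates.
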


\begin{proof}
Let $E\subseteq\N$ be a c.e.\ set and $k$ an upper bound on its Kolmogorov complexity. From $k$ we know that $A$ has an index in some finite set $F$. Using $\emptyset''$ we, for each $e\in F$, decide whether $W_e=\N$ and compute, when $W_e\neq\N$, an element outside $W_e$. We then wait that each one of this finite set of elements appears in $A$, enumerated on the input tape, and accept $A$ if it is the case.
\end{proof}

However, metric spaces behave differently. Although stated on Cantor space, the next result extends to any computable metric space \cite{Wei00}. 

\begin{proposition}\label{prop_cantor_semidec_relative}
Let $A$ be an oracle computing the Halting set. A subset of $\cantor_c$ is \prog semidecidable relative to $A$ if and only if it is \K semidecidable relative to $A$. 
\end{proposition}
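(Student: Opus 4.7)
The forward direction, that \K semidecidability relative to $A$ implies \prog semidecidability relative to $A$, is immediate: from any \prog name of $x\in\cantor_c$ one extracts a \oracle name (by running the program) together with an upper bound on $K(x)$ (the length of the program itself), and then feeds the resulting pair to an $A$-\K-semidecision machine.

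The converse is the substantial direction. Let $M^A$ \prog semidecide $S$ relative to $A$. Given $x \in \cantor_c$ via a \oracle name and $k \geq K(x)$, the task is to semidecide $x \in S$ using oracle $A$. The guiding equivalence is
\[
x \in S \iff \exists e \in F_k:\; \varphi_e = x \ \text{ and }\ M^A(e) \downarrow,
\]
where $F_k$ is the (finite) set of programs of length at most $k$. The plan is to adapt the argument of Lemma \ref{lem_ext}, noticing that the metric structure of Cantor space compensates for the fact that the target set $\{e: M^A(e)\downarrow\}$ is now only $A$-c.e. The key leverage is that, since $A\geq_T\emptyset'$, the predicate ``$\varphi_e$ disagrees with $x$ on some bit'' is uniformly $A$-semidecidable from $x$: we use $\emptyset'$ to decide halting of $\varphi_e(n)$ and then compare the value against $x(n)$. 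Hence every $e \in F_k$ with $\varphi_e \neq x$ can be actively refuted while $M^A(e)$ is run in parallel.

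The technical heart of the argument is then to use the \emph{unrelativized} Recursion theorem to build, for each pair $(a,b)$ of candidate indices, a plain-computable companion program $e(a,b)$ in the style of Lemma \ref{lem_ext}, so that the combined dynamics of the refutation procedure and of $M^A$ applied to $e(a,b)$ force a contradiction whenever a false positive would occur. All references to $A$ are externalized into the outer \K-semidecision machine (which has $A$ available), while the Recursion-theoretic construction itself stays unrelativized and therefore valid. A routine combination of this trap with the effective opens produced by Lemma \ref{lem_ext}, parametrized by $k$, yields the desired $A$-\K-semidecision procedure.

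The main obstacle is the one flagged in the introduction: the Recursion theorem does not relativize, so one cannot simply rerun Lemma \ref{lem_ext} with the $A$-c.e.\ set $\{e : M^A(e)\downarrow\}$ in place of the c.e.\ set $A$ there. The externalization strategy above is what saves the argument in Cantor space; crucially, it relies on equality between \oracle names being $\emptyset'$-semidecidable, a feature of the metric structure. In non-metric spaces such as $\P(\N)$, equality of \oracle names is not $\emptyset'$-semidecidable, and this externalization breaks down, consistent with the failure of the analogous statement observed in Proposition \ref{prop_zero_second}.
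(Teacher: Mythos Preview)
You have correctly isolated the crucial leverage: since $A\geq_T\emptyset'$, for each candidate index $e$ one can $A$-semidecide whether $\varphi_e$ fails to compute $x$ (either some $\varphi_e(n)\uparrow$, which $\emptyset'$ detects, or some $\varphi_e(n)\downarrow\neq x(n)$, which is plainly c.e.). This is exactly the feature of the metric setting that makes the proposition go through, and your closing remark about why it fails on $\P(\N)$ is on point.

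However, the subsequent appeal to the Recursion theorem and Lemma~\ref{lem_ext} is both unnecessary and, as written, not a proof. You assert that an $e(a,b)$-trap ``in the style of Lemma~\ref{lem_ext}'' can be built with the unrelativized Recursion theorem while all uses of $A$ are ``externalized'', but you do not say what the defining clauses of $W_{e(a,b)}$ would be. In Lemma~\ref{lem_ext} the definition of $W_{e(a,b)}$ branches on the c.e.\ event $e(a,b)\in A[\at t]$; here the relevant acceptance set $\{e:M^A(e)\downarrow\}$ is only $A$-c.e., so a plain-computable $W_{e(a,b)}$ cannot branch on it. Calling the remaining combination ``routine'' hides precisely the step that needs to be supplied.

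The paper's argument avoids all of this. Having noted your key observation, one simply rewrites the target as
\[
x\in S\iff \forall e\leq b_k\;\bigl(\varphi_e\text{ partial}\ \lor\ \exists n\,\varphi_e(n)\downarrow\neq x(n)\ \lor\ M^A(e)\downarrow\bigr),
\]
a finite conjunction of $A$-semidecidable disjunctions. Operationally: search for a partition of $\{0,\ldots,b_k\}$ into three blocks (partial / incompatible with $x$ / accepted by $M^A$) and accept once one is found. If such a partition exists then any actual index of $x$ lies in the accepted block, so $x\in S$; conversely if $x\in S$ then every candidate index fits some block, so a partition is found. No Recursion theorem is used. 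Your ingredients already suffice for this direct argument; drop the Lemma~\ref{lem_ext} machinery.
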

\begin{proof}
Assume that $\A$ is \prog semidecidable relative to the halting set, via some oracle Turing machine $M$. Assume we are given the halting set, $x\in\cantor_c$ and an upper bound $k$ on its Kolmogorov complexity. Look for a partition of $\{0,\ldots,k\}$ into three sets $A,B,C$ such that:
\begin{itemize}
\item for every $a\in A$, $x_a$ is incompatible with $x$ (there exists $n$ such that $x_a(n)\down\neq x(n)$),
\item for every $b\in B$, $x_b$ is partial (there exists $n$ such that $x_b(n)\dv$),
\item for every $c\in C$, $c$ is accepted by $M$ with the halting set as oracle.
\end{itemize}
Once such a partition is found, accept $x$.

First, the tests can be effectively done relative to the halting set: incompatibility of $x_a$ with $x$ is semidecidable, partiality of $x_b$ is semidecidable relative to the halting set. If such a partition is found then $x$ must have an index in $C$ so $x$ must satisfy the property. Conversely if $x$ satisfies the property then for every $i\leq k$, either $i$ is an index of $x$ or $x_i$ is incompatible with $x$ or $x_i$ is partial, so a partition exists and will be eventually found.
\end{proof}

\subsection{Functions to non-effective topological spaces}

In this section we provide results that strictly separate our three notions: \prog computability, \K computability and \oracle computability. The idea of the constructions is to build uniform versions of the examples given in Remark \ref{prop_sierp_semidec_relative} and Proposition \ref{prop_zero_second}. For this, one can imagine a function with two arguments, where the second argument $A\in \BB$ is always provided to the machine by a \oracle name and plays the role of the oracle.  The only difficulty is then to make the computation work in a uniform way in the oracle parameter. In order to get a well defined function w.r.t.\ our models, we express it as a function of the first argument only, but with values on $\O(\BB)$, which is the set of open subsets of the Baire space endowed with the topology generated by the following sets: given a compact set $K\subseteq\BB$, the class of open subsets of $\BB$ containing $K$ is open. This topology is not countably-based, and hence it is not an effective topology.  However it does have an admissible representation \cite{Schroder02}.

We now present the details of the simplest case, a uniform version of Remark \ref{prop_sierp_semidec_relative}. This result contrasts with Corollary \ref{cor_function}.

\begin{theorem}\label{thm_sierp_OB}
There exists a \prog computable function $F:\sierp\to\O(\BB)$ that is not \K computable.
\end{theorem}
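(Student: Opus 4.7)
The plan is as follows. On the Sierpi\'nski space $\sierp$, which is finite, the Kolmogorov bound $k$ in a \K name adds no asymptotic information beyond the \oracle name (for any $k$ majorizing both $K(\bot)$ and $K(\top)$, the bound is useless), so \K computability reduces to \oracle computability on $\sierp$. It therefore suffices to produce $F:\sierp\to\O(\BB)$ that is \prog computable but not \oracle computable.

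I would take $F(\bot) = \BB\setminus\{\mathbf{0}\}$, where $\mathbf{0}$ denotes the constant-$0$ sequence in $\BB$, and $F(\top) = \BB$. Both are open and $F(\bot)\subsetneq F(\top)$, consistent with the continuity constraint $F(\bot)\subseteq F(\top)$ forced by $\bot\in\overline{\{\top\}}$.

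To verify \prog computability, I would build, from a Markov name $e$ of $x\in\sierp$, the index of a Turing machine $\Pi_e$ that, on an \oracle name of $A\in\BB$, concurrently (i) searches $A$ for a non-zero coordinate and (ii) simulates $\varphi_e$ waiting for an output of $0$, halting as soon as either branch succeeds. For $x_e=\bot$ only (i) can ever fire, so $\Pi_e$ semi-decides $\BB\setminus\{\mathbf{0}\}=F(\bot)$; for $x_e=\top$ branch (ii) eventually fires on every $A$, so $\Pi_e$ semi-decides $\BB=F(\top)$. The assignment $e\mapsto(\text{index of }\Pi_e)$ is computable, yielding a valid Markov name of $F(x_e)$ in the admissible representation of $\O(\BB)$.

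The heart of the argument is the negative part, for which I would use Schr\"oder's admissible representation of $\O(\BB)$: a $\delta$-name of an open set can be taken to encode a Turing program that semi-decides membership in it, so $\delta$-names of two \emph{distinct} opens must disagree at some bit position $i^{*}$ that can be located knowing only the two opens (not the names). Assume for contradiction that an \oracle computable $B$ realizes $F$; let $p_\bot$ be its output on the \oracle name $1^\omega$ of $\bot$, and pick such an $i^{*}$ at which every $\delta$-name of $F(\top)$ must differ from $p_\bot$, using $F(\bot)\neq F(\top)$. If $B$ emits bit $i^{*}$ at step $\tau$ on $1^\omega$, it has then read only $1^\tau$; running $B$ instead on $1^\tau 0\mathbf{0}$ (an \oracle name of $\top$) gives the same computation through step $\tau$, and hence the same bit at position $i^{*}$, contradicting the requirement that the output represent $F(\top)$. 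Hence $F$ is not \oracle computable, and not \K computable. The main obstacle is the clean invocation of Schr\"oder's representation needed to extract such a bit position $i^{*}$ that is forced by the opens themselves and not by the choice of name; once this is in hand, the essential content is the diagonal observation that \prog computation can exploit the entire index $e$ in building its one-shot output, while \oracle computation must commit output bits based on finite prefixes of its input.
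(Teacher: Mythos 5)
There is a genuine gap, and it is fatal to your choice of $F$: your function $F(\bot)=\BB\setminus\{\mathbf{0}\}$, $F(\top)=\BB$ is monotone ($F(\bot)\subseteq F(\top)$), and monotone here means it is actually \emph{continuous} (check: for compact $K$, the preimage of $\{U:K\subseteq U\}$ is $\sierp$ if $\mathbf{0}\notin K$ and $\{\top\}$ if $\mathbf{0}\in K$, both open). Worse, it is outright \oracle computable, by precisely the parallel search you describe for the Markov realizer, run in Type-2 fashion: reading an \oracle name $p$ of $x\in\sierp$ and enumerating the open set, accept a cylinder $[\sigma]$ as soon as $\sigma$ contains a nonzero coordinate \emph{or} $p$ has shown the $\top$-signal; this outputs a name of $\BB\setminus\{\mathbf{0}\}$ when $x=\bot$ and of $\BB$ when $x=\top$. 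So no negative argument can succeed for this $F$. Correspondingly, your ``bit position $i^*$'' claim about Schr\"oder's representation is false: distinct opens have disjoint \emph{sets} of names, but there is no position at which all names of $F(\top)$ must differ from a fixed name of $F(\bot)$ --- e.g.\ names of $\BB$ that delay acceptance longer and longer converge pointwise to a name of $\BB\setminus\{\mathbf{0}\}$. Admissibility guarantees that every continuous function into $\O(\BB)$ has a continuous realizer, so for a continuous $F$ your diagonal at step $\tau$ cannot produce a contradiction; the very obstacle you flagged as ``the main obstacle'' is insurmountable.

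The paper's construction reverses the inclusion to destroy continuity: it sets $F(\bot)=\BB$ and $F(\top)=\BB\setminus\{T\}$, where $T(n)$ is the exact halting time of $\varphi_n(n)$ (and $0$ if it diverges), so that $F(\bot)\not\subseteq F(\top)$ violates monotonicity with respect to the specialization order; since $\sierp$ is finite, non-continuity kills both \oracle and \K computability (your reduction of \K to \oracle on $\sierp$ is the one correct part, and matches the paper). The price of non-monotonicity is that Markov computability can no longer be achieved by ``waiting for a signal'' --- there is no signal certifying $\bot$ --- and this is where the index genuinely enters: given $e$, one enumerates $U_\top=\BB\setminus\{T\}$ (which is effectively open, since ``$\varphi_e(e)$ does not halt in exactly $v$ steps'' is decidable for $v\geq 1$) together with $\{f:\varphi_e(e)$ does not halt in exactly $f(e)$ steps$\}$; the latter is all of $\BB$ when $\varphi_e(e)\dv$ and adds nothing beyond $U_\top$ when $\varphi_e(e)\down$. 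Your proposal instead uses the index only to simulate $\varphi_e$ and wait, which extracts no more from the index than an \oracle machine extracts from the name --- and that is why your $F$ collapses to \oracle computability.
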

\begin{proof}
We use the admissible numbering $\nu_\sierp$ of $\sierp$ defined by $\nu_\sierp(e)=\top$ if $\varphi_e(e)\down$, $\nu_\sierp(e)=\bot$ otherwise.
We define two effective open sets $U_\bot,U_\top$ and define $F(\bot)=U_\bot$ and $F(\top)=U_\top$. First, let $U_\bot=\BB$.
Let $T:\N\to\N$ be defined as follows: $T(n)$ is the halting time of $\varphi_n(n)$ if it halts, $T(n)=0$ otherwise. The open set $U_\top:=\BB\setminus \{T\}$ happens to be effective.
First the function $F$ is not Lacombe computable because it is not continuous: indeed, $F$ is not monotonic as $\bot\leq \top$ but $U_\bot=\BB$ is not contained in $U_\top\subsetneq\BB$. As $\sierp$ is finite, $F$ is not \K computable neither.
However $F$ is \prog computable. Given an index $e$ of $s\in\sierp$, enumerate $U_\top$ and enumerate the set of functions $f$ such that $\varphi_e(e)$ does not halt in exactly $f(e)$ steps. The latter set of functions is effectively open, uniformly in $e$. If $\varphi_e(e)\dv$ then the whole space $\BB$ is enumerated. If $\varphi_e(e)\down$ then nothing more than $U_\top$ is enumerated.
Intuitively, given $e$ and $f$, from $T$ one can decide whether $\varphi_e(e)$ halts, i.e.\ whether $\nu_\sierp(e)=\bot$.
\end{proof}

A similar construction, based on Proposition \ref{prop_zero_second}, yields a function $F:\P(\N) \to \O(\BB)$ which is \K computable but not \oracle computable by replacing the function $T$ from Theorem \ref{thm_sierp_OB} by a function $T'$ computing $\emptyset''$ and such that $\BB\setminus \{T'\}$ is effectively open.

Combining all these results, and using that fact that  Theorem \ref{thm_sierp_OB} can clearly also be realized using $\P(\N)$ in place of $\sierp$,   we obtain our announced Theorem C. 

\begin{theorem}\label{separation}For functions from $\P(\N)$ with values on $\O(\BB)$ on has that:
\begin{equation*}
\emph{\prog computability \,>\, \K computability \, > \, \oracle computability.} 
\end{equation*}
\end{theorem}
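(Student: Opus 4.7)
The plan is to establish the two strict inequalities separately, each by exhibiting an explicit witness function. For \prog computability $>$ \K computability, I will construct a \prog computable function $H:\P(\N)\to\O(\BB)$ that is not \K computable. Let $F:\sierp\to\O(\BB)$ be the function from Theorem \ref{thm_sierp_OB} and let $G:\P(\N)\to\sierp$ be the \oracle computable map defined by $G(A)=\top$ iff $A\neq\emptyset$. Set $H=F\circ G$; this is \prog computable by composition. To see that $H$ is not \K computable, I will run a delay argument. Fix $k_{0}\geq\max(K(\emptyset),K(\{0\}))$. Since $H(\emptyset)=\BB$, if $H$ were \K computable then on the K-name made of the empty enumeration of $\emptyset$ together with the bound $k_{0}$, the machine would eventually enumerate, at some finite stage $s$, a basic cylinder $[u]$ with $u\sqsubset T$---otherwise $T$ would lie outside the represented open set. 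Now consider the alternative Type-2 name of $\{0\}$ that delays the enumeration of $\{0\}$ until after stage $s$, paired with the same bound $k_{0}$. The two runs coincide up to stage $s$, so the same $[u]$ is enumerated; but now the target is $H(\{0\})=\BB\setminus\{T\}$, which does not contain $T\in[u]$, a contradiction.

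For \K computability $>$ \oracle computability, I follow the hint given after Theorem \ref{thm_sierp_OB}. The plan is to produce a function $T':\N\to\N$ satisfying (i) $T'$ computes $\emptyset''$, and (ii) $\{T'\}\subseteq\BB$ is effectively closed, so that $\BB\setminus\{T'\}$ is effectively open. Granting such a $T'$, define $F':\P(\N)\to\O(\BB)$ by $F'(\N)=\BB$ and $F'(A)=\BB\setminus\{T'\}$ for $A\neq\N$, and produce the output by the template of Theorem \ref{thm_sierp_OB}: always enumerate $\BB\setminus\{T'\}$ using property (ii), and additionally enumerate basic neighborhoods of $T'$ whenever the K-semidecision procedure from Proposition \ref{prop_zero_second} accepts ``$A=\N$''---a procedure that needs the oracle $\emptyset''$, which is precisely what (i) makes available inside the output space. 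This yields \K computability of $F'$, while non-\oracle computability is immediate because a \oracle machine handed only a Type-2 name of $A$ cannot discriminate $\N$ from finite supersets (the singleton $\{\N\}$ is not Scott-open, and the obstruction persists under any oracle).

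The main obstacle is the simultaneous fulfilment of (i) and (ii) in the construction of $T'$: a non-computable real whose singleton is effectively closed and which nevertheless computes $\emptyset''$. Naive candidates fail---for instance, the characteristic sequence of a $\Sigma^{0}_{2}$-set has disagreement conditions of the form $\Sigma^{0}_{2}\vee\Pi^{0}_{2}$, which are not c.e. A reasonable strategy is to encode $\emptyset''$ through an approximation whose mind-changes are c.e.\ detectable, for example by letting $T'(n)$ stabilize a non-increasing $\emptyset'$-computable double sequence $g(n,s)$ and routing the $\emptyset''$-information through the stabilization stages so that any strictly incorrect prefix of $T'$ gets refuted at some finite stage. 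Verifying both properties concurrently is the technical heart of the second construction.
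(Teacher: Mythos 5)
Your overall architecture is the same as the paper's: both strict inequalities are witnessed by variants of the function of Theorem \ref{thm_sierp_OB}, the first by transporting that function to $\P(\N)$, the second by replacing $T$ with a function $T'$ that computes $\emptyset''$ and has effectively open complement, made \K computable by hedging over candidate oracles as in Theorem \ref{thm_sierp_OB} combined with Proposition \ref{prop_zero_second}. Your first half is complete, and in fact more detailed than the paper, which merely asserts that Theorem \ref{thm_sierp_OB} ``can clearly also be realized'' on $\P(\N)$: the factorization $H=F\circ G$ with $G(A)=\top$ iff $A\neq\emptyset$ gives \prog computability, and your delay argument at the pair $\emptyset\subseteq\{0\}$ is legitimate (any finite prefix of a standard name of $\emptyset$ extends to a name of $\{0\}$, and the fixed $k_0$ is a valid complexity bound for both inputs) and is genuinely needed, since the paper's reason for non-\K computability on $\sierp$ --- finiteness of the space --- is unavailable on $\P(\N)$. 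Your hedging description of the \K algorithm for the second half (enumerate $\BB\setminus\{T'\}$ throughout, and enumerate the cylinder $[g\res{u}]$ whenever the Proposition \ref{prop_zero_second} procedure accepts with oracle answers read from the prefix $g\res{u}$) is also exactly the intended mechanism, and the non-\oracle-computability via the non-open preimage $\{\N\}$ of the open class $\{U\in\O(\BB):T'\in U\}$ is fine.

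The genuine gap is the one you flagged yourself: the existence of $T'$ with (i) $T'\geq_T\emptyset''$ and (ii) $\{T'\}$ effectively closed is never established, and the strategy you sketch does not work as stated. If $g(n,s)$ is a non-increasing $\emptyset'$-computable approximation with limit $T'(n)$, then refuting a wrong value $f(n)<T'(n)$ is a $\Pi^0_1(\emptyset')$ condition and refuting $f(n)>T'(n)$ is $\Sigma^0_2$; neither is c.e., and even enumerating the candidate refutations requires $\emptyset'$, so (ii) fails for this encoding unless the $\emptyset'$-information is itself made finitely refutable --- which is the original problem over again. The standard fix is to iterate the paper's own trick by self-relativization: take $T'=\langle T,T_2\rangle$, where $T$ is the halting-time function of Theorem \ref{thm_sierp_OB} and $T_2(n)$ is the halting time of the $n$-th oracle machine $\varphi_n^{T}(n)$ if it halts, and $0$ otherwise. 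A candidate $g=\langle g_1,g_2\rangle\neq T'$ is then refuted at a finite stage: either $g_1(n)\neq T(n)$ for some $n$ (c.e.\ as in the paper), or the simulation of $\varphi_n^{g_1}(n)$, with queries answered by $g_1$, halts at some time $t\neq g_2(n)$, or $g_2(n)=m>0$ while the simulation does not halt at exactly time $m$; each condition depends only on a finite prefix of $g$, so $\BB\setminus\{T'\}$ is effectively open, while $T'$ computes the jump of $T\geq_T\emptyset'$ and hence $\emptyset''$. With this lemma supplied, your proof closes; without it, the second strict inequality remains unproved (the paper, admittedly, also states the existence of $T'$ without proof, but it is a short standard construction rather than an open verification task).
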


While \oracle computable functions are always Scott continuous (i.e.\ monotone and compact), one can show that \K computable functions are always monotone but not necessarily compact. \prog computable functions may even not be monotone.

Let us now briefly discuss whether Theorem \ref{separation} holds for functions from the Cantor space to $\O(\BB)$. Friedberg's example of a \progs{} (hence \Ks)-semidecidable set that is not \oracle semidecidable directly implies the second inequality. However the idea behind the proof of the first inequality cannot be applied on Cantor space. Indeed, using Proposition \ref{prop_cantor_semidec_relative} one can show that the analog of the function of Theorem \ref{thm_sierp_OB} is actually \K computable.

\begin{proposition}
The function $G:\cantor\to\O(\BB)$ mapping $0^\omega$ to $\BB$ and any other sequence to $\BB\setminus \{T\}$ is \K computable.
\end{proposition}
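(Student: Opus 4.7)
Plan for the proof:

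To show that $G$ is \K computable, I must construct, given a \K-name $(k,f)$ of $x \in \cantor_c$, a name of $G(x)$ in the admissible representation of $\O(\BB)$. By the usual equivalence between computing into $\O(\BB)$ and jointly semi-deciding membership, this amounts to designing a Turing machine $M$ which, on input $(k, f, y)$ with $y$ a $\BB$-name, halts if and only if $y \in G(x)$, i.e., if and only if "$x = 0^\omega$ or $y \neq T$".

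The approach mirrors the partition argument from Proposition~\ref{prop_cantor_semidec_relative}. In one thread $M$ semi-decides $y \neq T$ directly: for each position $n$, the condition $y(n) \neq T(n)$ is effectively semi-decidable because $T(n)$ is the halting time of $\varphi_n(n)$ (decidable when $y(n) > 0$ by running $\varphi_n(n)$ for $y(n)$ steps, semi-decidable when $y(n) = 0$ by waiting for $\varphi_n(n)$ to halt). In a parallel thread, $M$ treats $y$ hypothetically as the sequence $T$, which would yield access to the halting-set oracle $\emptyset'$; under this hypothesis, it seeks a partition of the finite set of programs of length $\leq k$ into subsets $A$, $B$, $C$ such that every $a \in A$ is incompatible with $f$, every $b \in B$ provides, via its index used as a probe as in the proof of Theorem~\ref{thm_sierp_OB}, a witness $y(b) \neq T(b)$ of $y \neq T$, and every $c \in C$ certifies that $\varphi_c$ computes $0^\omega$ to the precision forced by the \K-bound $k$.

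Correctness relies on the following asymmetry: if $y = T$, then the partition-based thread uses correct halting-set information derived from $y$ and certifies $x = 0^\omega$ whenever that is the case; if $y \neq T$, the partition thread may fire spuriously on bad oracle information, but such spurious firings are harmless since $y \neq T$ automatically places $(x,y)$ in the target relation. Thus for every input with $y \in G(x)$ at least one thread halts, and for every input with $y \notin G(x)$ (meaning $y = T$ and $x \neq 0^\omega$) the halting-set information is genuine, the $C$-side of the partition cannot be completed, and no thread halts.

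The main obstacle is specifying the $C$-condition so that it is semi-decidable from $(k,f,y)$ and actually forces $x = 0^\omega$ when $y = T$, given that the bare set $\{0^\omega\}$ is not Markov-semi-decidable even relative to $\emptyset'$. The workaround is to combine the \K-bound $k$, which limits candidates to a finite collection of programs, with the partition structure that discharges partial and incompatible candidates using the halting-set oracle extracted from $y$, leaving only indices whose behaviour must be compatible with $0^\omega$; the Recursion Theorem (as in Lemma~\ref{lem_ext}) is then used to design a self-referential witness that closes the argument.
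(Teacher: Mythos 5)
Your overall architecture is exactly the paper's: reduce computing into $\O(\BB)$ to jointly semideciding ``$y\in G(x)$'' from $(k,f,y)$; run one thread that semidecides $y\neq T$ directly (your case analysis on $y(n)>0$ versus $y(n)=0$ is correct, and matches the effectivity of $\BB\setminus\{T\}$ from Theorem~\ref{thm_sierp_OB}); run a second thread that treats $y$ as $T$, extracts halting-set answers from it, and searches for a partition of the finitely many candidate programs as in Proposition~\ref{prop_cantor_semidec_relative}, with spurious acceptances harmless because $y\neq T$ already puts $y$ in $G(x)$. That asymmetry argument is right. The genuine gap is precisely at the point you yourself flag as ``the main obstacle'': your claim that $\{0^\omega\}$ is not \prog semidecidable relative to $\emptyset'$ is false. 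For a \emph{total} $\varphi_c$, one has $\varphi_c=0^\omega$ if and only if $\varphi_c$ never outputs a nonzero value, and the latter is a $\Pi^0_1$ property of the index $c$, decidable by a single query to the (simulated) halting set. So $\{0^\omega\}$ is in fact \prog \emph{decidable} relative to $\emptyset'$ --- the exact Cantor-space analogue of Remark~\ref{prop_sierp_semidec_relative} --- and this is the only ingredient the paper needs: its whole proof is to plug this into the algorithm of Proposition~\ref{prop_cantor_semidec_relative}, whose hypothesis is \prog semidecidability relative to the oracle, while semideciding $y\neq T$ in parallel. No Recursion Theorem enters at all.

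Because you deny yourself this fact, your substitute $C$-condition --- ``$\varphi_c$ computes $0^\omega$ to the precision forced by the \K-bound $k$'' --- is unsound: $k$ bounds the number of candidate programs, not any precision, and since $\{0^\omega\}$ has empty interior no finite-precision test can certify membership. Concretely, for any computable precision $p(k)$ there are sequences $x=0^t10^\omega$ with $K(x)\leq k$ and $t>p(k)$; the true index of such an $x$ passes your $C$-check, the partition completes, and your machine accepts on input $y=T$ even though $T\notin G(x)$. Two further deviations compound this: your $B$-condition ($y(b)\neq T(b)$) is not the paper's (there, $B$ collects the \emph{partial} indices $x_b$, a condition semidecidable from the simulated halting oracle --- with the correct $\Pi^0_1$ criterion for $C$ above, $B$ can even be dispensed with, since any $\varphi_i$ that ever outputs a nonzero value is incompatible with $0^\omega$ and goes into $A$); and the closing appeal to ``a self-referential witness'' via the Recursion Theorem is a gesture, not a construction. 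The fix is short: take $C=\{c:\varphi_c$ never outputs a nonzero value, as decided from $y$-as-$T\}$, and the soundness/completeness argument you already sketched goes through verbatim.
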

\begin{proof}
Given $x,k$ and $f$, apply the algorithm given by Proposition \ref{prop_cantor_semidec_relative} to semi-decide, if $f=T$, whether $x=0^\omega$. In parallel, semidecide whether $f\neq T$.
\end{proof}

%

We leave the following question open: is there a \prog computable function from the Cantor space to $\O(\BB)$ that is not \K computable? A simpler version of this question: is there an oracle that separates \K semidecidability from \prog semidecidability on Cantor space? We note that, by Proposition \ref{prop_cantor_semidec_relative}, such an oracle should not compute $\emptyset'$.

\section{Future work}\label{sec_future}
We list a few problems for future work.
\begin{itemize}
\item Find a characterization of the Ershov topology on other spaces than $\Nbar$, like the Cantor space.
\item Determine for which levels of the effective difference hierarchy the \prog model and the \K model are equivalent. We know from Theorem \ref{thm_nce} that the equivalence holds for the \emph{finite} levels. What about the level $\omega$?
\item All our results hold when the space $\X$ is an effective topological space. However the three models also make sense on any represented space. It seems like an interesting research program to study the extent to which our results are valid in this case.
\item Compare the effective Borel hierarchy induced by the \prog semidecidable sets, the hierarchy induced by the arithmetical hierarchy on the indices and the effective Borel hierarchy induced by the standard topology.
\end{itemize}


\begin{thebibliography}{KLS57}

\bibitem[Cei62]{Ceitin62}
G.~S. Ceitin.
\newblock Algorithmic operators in constructive metric spaces.
\newblock {\em Trudy Matematiki Instituta Steklov}, 67:295--361, 1962.
\newblock English translation: {\it American Mathematical Society
  Translations}, series 2, 64:1-80, 1967.

\bibitem[dB13]{Brecht13}
Matthew de~Brecht.
\newblock Quasi-polish spaces.
\newblock {\em Ann. Pure Appl. Logic}, 164(3):356--381, 2013.

\bibitem[Fri58]{Friedberg58}
Richard~M. Friedberg.
\newblock Un contre-exemple relatif aux fonctionnelles r\'ecursives.
\newblock {\em Comptes Rendus de l'Acad\'emie des Sciences}, 247:852--854,
  1958.

\bibitem[FW79]{FreiWie79}
Rusins Freivalds and Rolf Wiehagen.
\newblock Inductive inference with additional information.
\newblock {\em Journal of Information Processing and Cybernetics}, 15:179--185,
  1979.

\bibitem[Grz57]{Grzegorczyk57}
Andrzej Grzegorczyk.
\newblock On the definitions of computable real continuous functions.
\newblock {\em Fundamenta Mathematicae}, 44:61--71, 1957.

\bibitem[Her96]{Hertling96}
Peter Hertling.
\newblock Computable real functions: {T}ype 1 computability versus {T}ype 2
  computability.
\newblock In {\em {CCA}}, 1996.

\bibitem[KLS57]{KreLacSch57}
G.~Kreisel, D.~Lacombe, and J.R. {Sch\oe nfield}.
\newblock Fonctionnelles r\'ecursivement d\'efinissables et fonctionnelles
  r\'ecursives.
\newblock {\em Comptes Rendus de l'Acad\'emie des Sciences}, 245:399--402,
  1957.

\bibitem[Kus06]{Kushner06}
Boris~A. Kushner.
\newblock The constructive mathematics of A. A. Markov.
\newblock {\em Amer. Math. Monthly}, 113(6):559--566, 2006.

\bibitem[Lac55]{Lacombe55}
Daniel Lacombe.
\newblock Extension de la notion de fonction r\'ecursive aux fonctions d?une ou
  plusieurs variables r\'eelles {I-III}.
\newblock {\em Comptes Rendus {A}cad\'emie des {S}ciences {P}aris},
  240,241:2478--2480,13--14,151--153, 1955.

\bibitem[Mar54]{Markov54}
A.~A. Markov.
\newblock On the continuity of constructive functions (russian).
\newblock {\em Uspekhi Mat. Nauk}, 9:226--230, 1954.

\bibitem[MS55]{MyhillShe55}
J.~Myhill and J.~C. Shepherdson.
\newblock Effective operations on partial recursive functions.
\newblock {\em Mathematical Logic Quarterly}, 1(4):310--317, 1955.

\bibitem[PE60]{PourEl60}
Marian~B. Pour-El.
\newblock A comparison of five ``computable'' operators.
\newblock {\em Mathematical Logic Quarterly}, 6(15-22):325--340, 1960.

\bibitem[Ric53]{Rice53}
H.~G. Rice.
\newblock Classes of recursively enumerable sets and their decision problems.
\newblock {\em Transactions of the American Mathematical Society}, 74(2):pp.
  358--366, 1953.

\bibitem[Rog87]{Rog87}
Hartley~Jr. Rogers.
\newblock {\em Theory of Recursive Functions and Effective Computability}.
\newblock MIT Press, Cambridge, MA, USA, 1987.

\bibitem[Sch02]{Schroder02}
Matthias Schr{\"o}der.
\newblock Extended admissibility.
\newblock {\em Theoretical Computer Science}, 284(2):519--538, 2002.

\bibitem[Sel84]{Selivanov84}
Victor~L. Selivanov.
\newblock {Index sets in the hyperarithmetical hierarchy}.
\newblock {\em Siberian Mathematical Journal}, 25:474--488, 1984.

\bibitem[Sha56]{Shapiro56}
N.~Shapiro.
\newblock Degrees of computability.
\newblock {\em Transactions of the American Mathematical Society}, 82:281--299,
  1956.

\bibitem[Spr01]{Spreen01}
Dieter Spreen.
\newblock Representations versus numberings: on the relationship of two
  computability notions.
\newblock {\em Theoretical Computer Science}, 262(1):473--499, 2001.

\bibitem[Tur36]{Tur36}
Alan Turing.
\newblock On computable numbers, with an application to the
  Entscheidungsproblem.
\newblock {\em Proceedings of the London Mathematical Society}, 2, 42:230--265,
  1936.

\bibitem[Wei00]{Wei00}
Klaus Weihrauch.
\newblock {\em Computable Analysis}.
\newblock Springer, Berlin, 2000.

\end{thebibliography}

\end{document}